\documentclass[acmsmall, natbib=false]{acmart}
\pdfoutput=1 
\usepackage{amsmath}
\usepackage{appendix}
\usepackage{multirow}
\usepackage{graphicx}
\usepackage{bbm}
\usepackage{subcaption}

\usepackage{amsthm}
\usepackage{breqn}

\setcopyright{none}
\settopmatter{printacmref=false} 
\renewcommand\footnotetextcopyrightpermission[1]{} 
\pagestyle{plain}

\begin{document}
\title{Ironing the Graphs: Toward a Correct Geometric Analysis of Large-Scale Graphs}

\author{Saloua Naama}

\email{saloua.naama@univ-smb.fr}
\affiliation{%
  \institution{Université Savoie Mont Blanc}
  \country{France}
}

\author{Kavé Salamatian}
\affiliation{%
  \institution{Université Savoie Mont Blanc}
  \country{France}}
\email{kave-salamatian@univ-smb.fr}

\author{Francesco Bronzino}
\affiliation{%
  \institution{École Normale Supérieure de Lyon}
  \country{France}
}
\email{francesco.bronzino@ens-lyon.fr}





\begin{abstract}
    Graph embedding approaches attempt to project graphs into geometric entities, {\em i.e.}, manifolds. At its core, the idea is that the geometric properties of the projected manifolds are helpful in the inference of graph properties. However, the choice of the embedding manifold is critical and, if incorrectly performed, can lead to misleading interpretations due to incorrect geometric inference. In this paper, we argue that the classical embedding techniques cannot lead to correct geometric interpretation as the microscopic details, {\em e.g.}, curvature at each point, of manifold, that are needed to derive geometric properties in  Riemannian geometry methods are not available, and we cannot evaluate the impact of the underlying space on geometric properties of shapes that lie on them. We advocate that for doing correct geometric interpretation the embedding of graph should be done over regular constant curvature manifolds. To this end, we present an embedding approach, the discrete Ricci flow graph embedding (dRfge) based on the discrete Ricci flow that adapts the distance between nodes in a graph so that the graph can be embedded onto a constant curvature manifold that is homogeneous and isotropic, {\em i.e.}, all directions are equivalent and distances comparable, resulting in correct geometric interpretations. A major contribution of this paper is that for the first time, we prove the convergence of discrete Ricci flow to a constant curvature and stable distance metrics over the edges. A drawback of using the discrete Ricci flow is the high computational complexity that prevented its usage in large-scale graph analysis. Another contribution of this paper is a new algorithmic solution that makes it feasible to calculate the Ricci flow for graphs of up to 50k nodes, and beyond. The intuitions behind the discrete Ricci flow make it possible to obtain new insights into the structure of large-scale graphs. We demonstrate this through a case study on analyzing the internet connectivity structure between countries at the BGP level. 
\end{abstract}

\fancyfoot{}

\maketitle
\thispagestyle{empty}

\newtheorem{lem}{Lemma}[section]

\section{Introduction}
Graphs are a fundamental tool to capture complex interactions through a relatively simple logical framework of
node-to-node relations. They are used in a wide range of applications and scientific domains \cite{Cai2018,berg2004}. Yet, understanding the structure and global properties of a graph remains challenging. Geometric interpretations are widely used to represent complex problems and help develop intuitions that lead to solutions. For example, such interpretations are at the core of classical machine learning techniques, like k-means, where a cluster is defined through a geometric centro\"id \cite{books/daglib/0033642}. Several attempts to define geometry for graphs are present in the literature. They are generally based on considering vertices as ``points'' seating over a low dimensional Riemannian manifold, and link weights as geodesic ``distances'' between these points ~\cite{ROBLESKELLY20071042, Cai2018}. This approach, called graph embedding, has a large set of applications including graph clustering \cite{LUO20032213}, links prediction and graph alignment \cite{berg2004}, graph robustness and resilience \cite{10.5555/979922.979965}, classification tasks in machine learning~\cite{4359350}. More recently, Graph Neural Networks (GNNs) solutions have relied on embeddings of nodes and links to project global graph properties into a space defined by the structure of a neural network \cite{Scarselli:2009ku}. 

It is noteworthy to highlight that assuming that link weights represent distances, as typically done in the literature \cite{10.5555/1564781, 10.5555/1202905}, does not immediately lead to correct geometric interpretations. Link weights might not even define a distance, thus invalidating triangular inequalities. Further, even when they represent a distance metric, by validating triangular inequalities, they are insufficient to validate geometric interpretations. Geometric properties result from the interactions of the shapes' properties ({\em e.g.}, distance, angles, areas, {\em etc.}), and the fundamental properties of the underlying space where the shapes are lying on ({\em e.g.} lengths of the sides of a triangle and its angles).

More advanced graph embedding techniques, like Multi-Dimensional Scaling (MDS)\cite{NIPS1994_1587965f} or isomap \cite{doi:10.1126/science.295.5552.7a}, project graphs into a manifold, but only provide vertex coordinates in the embedding space, and no microscopic (differential) properties of the embedding manifold. This means that from these embedding we cannot perform any geometric inference as we cannot evaluate the impact of the underlying space on shape properties. This leads to strong systematic biases and misleading geometric interpretations. Thus, considering the importance that geometric properties and interpretations have on various domains ({\em e.g.}, in machine learning techniques), we need to develop reliable methods to enable the correct geometric interpretations and inference. We advocate in this paper that geometric intuitions and ``correct''  geometric inference are only possible over regular constant curvature spaces, that are homogeneous and isotropic. 

The goal of this paper is to {\em develop a set of tools and methods for embedding a graph into a space where geometric inference approaches can be applied}. In particular, we leverage the Ricci Curvature to measure the local geometry of manifolds.  We use the continuous Ricci flow \cite{Hamilton88}, that acts on the surface of a manifold by deforming the distances around a point (stretching or squeezing them) proportionally to its Ricci curvature, {\em i.e.}, the Ricci flow acts like a steam iron pressing over the wrinkles of a bed linen (the initial manifold) and transforming it into a flat sheet (a constant curvature manifold). The Ricci flow is proven,  through the proof the Poincar{\'e} conjecture on 3-manifolds~\cite{Perelman2003}, to transform a given manifold into a constant curvature surface.  

Ollivier~\cite{OLLIVIER2007} extended the concept of Ricci curvature to graphs through the  Ollivier-Ricci curvature (ORC). This extension is based on a characterization of curvature of graphs in terms of optimal transport, {\em i.e.}, Wasserstein distance (see section \ref{sec:transport}).  In this paper, we extend this to graphs, through a non-linear diffusion recurrence, the Ricci flow originally defined for manifolds. We prove that the discrete Ricci flow recurrence converges to a fixed point where all links' curvature are constant and a corresponding ``canonic'' distance metric is assigned to each link. This theorem is the counterpart of the Perelman-Poincar{\'e} theorem \cite{Perelman2003} for graphs. Moreover, we prove the existence of an embedding manifold, with constant curvature given by the fixed point of the Ricci flow convergence, where graph vertices are positioned at ``canonic'' distances. This opens the way for a new type of graph embedding, the discrete Ricci flow graph embeddings (dRfge), that maps a graph into a regular and constant curvature manifold that is homogeneous and isotropic.  Over such a space geometric analysis becomes possible, resulting into a geometrization of the graph.

One shortcoming of our approach is that the underlying techniques used to perform the embedding necessitate the calculation of millions of shortest paths on the studied graphs. Thus, the complexity of the technique can quickly become intractable for larger graphs ({\em e.g}, 50k nodes or more). To tackle this challenge, we develop an algorithm approach that enables the calculation of the embedding in a reasonable time for large graphs. This algorithm reduces strongly the complexity by rearranging the edges during the calculation to leverage the existing redundancy when calculating a large number of the shortest paths. To demonstrate the usefulness of our approach, as well as to prove the practicality of the developed algorithms, we apply dRfge to analyze a commonly large graph used in networking, {\em i.e.}, the Board Gateway Protocol (BGP) graph.    



In summary, the contributions of this paper are the following:
\begin{enumerate}
  \item {\bf We prove that the discrete Ricci flow calculated over any graph converges}, resulting into projecting the graph into a constant curvature space.  (Section~\ref{sec:convergence&embedding}).
  \item {\bf We present a new algorithmic solution to calculate discrete curvatures on graphs.} Our algorithm enables the application of the discrete Ricci flow over large scale graphs in reasonable time, by reducing by two to three orders of magnitude the time needed to calculate the dRfge (Section~\ref{sec:algorithms}).
  \item Finally, {\bf we demonstrate how to use discrete Ricci flow distances to evaluate the robustness and the connectivity of graphs.} We show results for one case study: the structural analysis of Autonomous Systems (AS) level connectivity graphs extracted from BGP feeds (Section~\ref{sec:applications}). A second application to deriving robust spanning trees over road maps is pushed to the appendix.
\end{enumerate}

\section{Motivation and background}\label{sec:theory}
 
Geometry is one of the oldest areas of mathematics and has shaped our view of the world and our intuitions. A major application of geometry consists of calculating unknown distances and angles using known ones. Classical geometry deals with uniform spaces that validate two fundamental properties equivalent to the first four Euclid axioms \cite{culina_elementary_2018}:
(1)~homogeneity ({\em i.e.}, all points of space being equivalent) and (2)~ isotropy ({\em i.e.}, all directions being equivalent). There are only three spaces, Euclidean, hyperbolic, and spherical space, called regular constant curvature space, or Einstein space, that have these two properties. Over such spaces, one can easily evaluate the impact of the properties of the underlying space into the properties of shapes. For example, in the Euclidean space, we have the Al-Kashi theorem, or law of cosines, that relates the lengths of the sides of a triangle on an Euclidean plane to the cosine of one of its angles \cite{1891traite}. This formula, which is the basis of trigonometry, does not depend on the orientation of the triangle, nor on its position. Similar formulas exist in the hyperbolic and spherical spaces.


Several machine learning approaches on graphs depend on inferring unknown distances using known distances ({\em e.g.}, defining cluster centro{\"i}d in k-means, or projection into subspaces). Another tool that is central for geometric intuition is congruence, {\em i.e.}, matching one by one the points of two shapes through rotation and translation. However, congruence might not be valid over general manifolds that lack homogeneity and isotropy ({\em e.g.}, in a cylindrical space the distances along the cylinder and turning around the cylinder are not equivalent as geodesic curves cannot be superposed). In other terms, distances and angles are not enough to characterize shapes over general manifolds.

In Riemannian geometry, the properties of the underlying space are transferred to shapes through curvature.  Curvature indicates how much a geometric space deviates from a flat space, {\em i.e.}, an Euclidean space.  Differential geometry \cite{lafontaine} provides tools and methods coming from differential calculus to account for the impact of curvature on geometric properties like length, angles, areas, {\em etc.} For example, the distance between two very close points over the manifold, is calculated by projecting the manifold over a regular constant curvature space (a tangent plane) and readjusting the effect of curvature on the distance in the projected space. Geodesics, which are the shortest paths between two points over a manifold, are obtained through this infinitesimal readjustment process at all points, {\em i.e.}, geodesics are following lines in the tangent space that are zero curvature. In a nutshell, differential geometry extends concepts of Euclidean geometry to Riemannian spaces by finding the corrective terms, resulting from the shapes of underlying spaces, to be applied in each point. 

It is notable that laws similar to the law of cosine, which ties lengths and angles together in a triangle, only exist in homogeneous and isotropic spaces. The best that can be achieved for a non-constant curvature space is given by Toponogov's theorem \cite{lafontaine}, which states that in a negative curvature manifold (respectively, positive curvature space), the length of the third side of a triangle can only be upper bounded (respectively, lower bounded) by what predicted by the cosine law in a regular constant curvature space. In other terms, over a general manifold we cannot evaluate distances or angles using known ones, and classical geometric intuition through the cosine theorem, the classical $L_2$ distance,  will over-estimate or under-estimate distance depending on the curvature sign.  This strongly reduces the interest of embedding graphs into general manifolds as over them any intuition coming from classical geometry will be erroneous. 

In a nutshell, in the constant curvature space we can easily relate shapes' geometric properties irrespective of their orientation or position, while for the Riemannian space, we need to use, at each point, microscopic properties of the underlying space. This observation is the core motivation of this paper.


\section{Definitions}
\label{sec:definitions}
In this section, we present, in an incremental way, the concepts we will need for developing the discrete Ricci flow Graph Embedding (dRfge). Our aim here is to give the readers, who are not assumed to be knowledgeable about differential geometry, the intuition of the mathematical concepts behind the practical tools we will introduce in the paper. We have therefore balanced the mathematical precision with the ease of reading. A complete and precise presentation of differential and Riemannian geometry materials can be found in \cite{lafontaine} and in the given references there. Moreover, most of these concepts presented have already been defined in detail in the given references. Our main contribution is to prove the existence of a constant curvature embedding for graphs, to provide an iterative method that converges to the embedding, and to prove its convergence. In this section, we will introduce the set of concepts and parameters that will result into the final embedding with constant curvature. In the forthcoming, we will consider a weighted graph $G=(V, E, w)$. We are not assuming that the weights $w_{st}$ are a distance and they might not validate triangular inequality. In the simplest case, the weights are 1 for all edges. 
\subsection{Optimal Transport in graphs} \label{sec:transport}
Wasserstein distance and Optimal Transport over a graph have been extensively used in the literature \cite{Villani, pmlr-v97-titouan19a, Kolouri2020WassersteinEF,10.1007/978-3-030-67661-2_34}. It can be describe through supposing a set of $m$ sources  $\mathcal{S}=\{x_i\}_{i=1}^{m}\in V$ that produces each a share $\mu(x_i)$ of a commodity, and a set of $n$ sinks $\mathcal{D}=\{y_j\}_{j=1}^{n}\in V$ that want to consume each a share $\nu(y_j)$ of the commodity. The minimal cost of transferring one unit of the commodity from $x_i$ to $y_j$, which we can interpret as a distance, is noted as $d(x_i,y_j)$, resulting into a distance matrix ${\bf D}=\left(d(x_i,y_j)\right)$. We define $\theta(s,t)$ as the share of the commodity in $s$ that should be transported into $t$. The transport plan is the matrix $\Theta=(\theta(s,t))_{m\times n}$ and should validate commodity conservation constraints, {\em i.e.}, $\sum_{t \in \mathcal{D}}\theta(s,t)=\mu(s)$ and $\sum_{s \in \mathcal{S}}\theta(s,t)=\nu(t)$. The total transport cost of a transport plan is :
\begin{equation}
    C(\Theta, \mu, \nu)=\sum_{s \in \mathcal{S},t \in \mathcal{D}}\theta(s,t)d(s,t).
\end{equation}
The optimal transport problem search for the transport plan, denoted $\Theta^*(\mu,\nu)$, with the minimal total transport cost  $C^*(\mu, \nu)$:
\begin{align}
\Theta^*(\mu,\nu) = \arg\min_{\Theta} C(\Theta,\mu,\nu), \;C^*(\mu, \nu)=\sum_{s \in \mathcal{S},t \in \mathcal{D}}\theta^*(s,t)d(s,t)
\label{eq:optTransport},
\end{align}
This is an instance of a linear program with $|\mathcal{S}|\times|\mathcal{D}|$ variables, where the commodity initial distribution $\mu$ and the final one $\nu$, are given as well as the distance matrix ${\bf D}$. The optimal transport $\Theta^*$ always exists for a connected graph,  and can be computed exactly or approximated efficiently (see for example \cite{Cuturi}). Solving the optimal transport problem over a graph is straightforward using the simplex algorithm or interior point methods. We will later show that in practice, the cost of calculating the sources to destinations distance matrix accounts for most of the total cost of calculating the optimal transport distance and that solving the linear problem is only a small portion of the total cost. 

An important property of optimal transport is that it validates triangular inequality as stated below
\begin{lem}
    Let's suppose three measure $\mu$, $\psi$ and $\nu$ defined over the vertices of a graph $G=(V,E)$, we have the triangular inequality:
    $$
    C^*(\mu, \nu)+C^*(\nu, \psi)\ge C^*(\mu, \psi)
    $$
\end{lem}
\begin{proof}
if $\theta(\mu,\nu)$, resp. $\Omega(\nu,\psi)$, is a transport plan between measures $\mu$ and $\nu$, resp. $\nu$ and $\psi$, $\theta(\mu,\nu)\times \Omega(\nu,\psi)$ is a transport plan between $\mu$ and $\psi$. However, such a plan might not be the one optimizing the cost of transferring $\mu$ to $\psi$, leading to the triangular inequality 
\end{proof}
The optimal transport depends on two ingredients: the distance matrix $D=(d(s,t))$, and the initial and final commodity distribution.  While the distance matrix is clearly a structural property of the graph, the initial and final commodity distributions are related to what the graph represents. Therefore the curvature that uses the Wasserstein distance can integrate the structure and properties of the underlying process represented by the graph. 
\subsection{Discrete Ricci curvature}
\label{sec:ricci}
Different notions of curvature, all representing and quantifying deviation of the space from flatness, have been defined in differential geometry \cite{lafontaine}. We use here, the Ricci curvature over a manifold, which is defined by looking into the transport, along the geodesics of the manifold, of a ball of radius $\epsilon$ around a point, to a ball around a close-by point. When the geodesics related to nearby points diverge and converge (resp. converges and diverges), we have a positive (resp. negative) Ricci Curvature. When the geodesics remain parallel we have zero curvature.

While graphs are not manifolds, one can consider the neighborhood of the vertex as a ball around it, and the shortest paths as playing the role of the geodesics. In \cite{Ollivier:2009}, the transport of the neighborhood $N(x)$ of the x extremity of the edge $(x,y) \in E$, vertex $x$ to the neighborhood $N(y)$ of $y$ is analyzed. Let $\mu^{\alpha}_x$, $\nu^{\alpha}_y$ be two distributions defined over graph vertices as:
\begin{equation}
\label{eqn:dist}
\mu^{\alpha}_x(z)= \frac{(1-\alpha) w_{xz} }{\sum_{t \in N(x)}w_{xt}}\mathbbm{1}_{N(x)}(z)+\alpha \mathbbm{1}_x(z)
\;,    \nu^{\alpha}_y(z)= \frac{(1-\alpha) w_{yz}}{\sum_{t \in N(y)}w_{yt}}\mathbbm{1}_{N(y)}(z)+\alpha \mathbbm{1}_y(z)
\end{equation}
where $N(.)$ is the neighborhood set, and $\mathbbm{1}_A$ the indicator function of set $A$. One can apply the optimal transport framework defined above to derive the optimal transport plan  $\Theta^*(x,y, \alpha)$ , and  $C^*(x,y,\alpha)$. 
The $\alpha$-curvature of an edge $(x,y)$, also called Ricci idleness, is defined as: 
\begin{equation}
    \kappa(x,y,\alpha) = 1 - \frac{C^*(x,y,\alpha)}{d(x,y)},
\end{equation}
It is notable that $\kappa(x,y,1)=0$ for all graphs and edges. In \cite{Ollivier:2009}, the curvature of an edge $(x,y)$ is defined as $\kappa(x,y,0)$, {\em i.e.}, the distribution used for calculating the optimal transport in Ollivier's definition of curvature take values only in the neighborhood of the extremities and not in the center (the edge extremity). This definition has been used frequently in the literature. For example \cite{7218668} and \cite{10.1145/3489048.3522645} use the $\frac{1}{2}$-curvature, and \cite{Sia2019OllivierRicci} and\cite{Ni2019Community} use the 0-curvature (the Ollivier Ricci definition).

In this paper, we will use a slightly different definition of curvature, the one proposed by Lin, Lau, and Yau (LLY curvature) \cite{LLY}. For the edge (x,y) the LLY curvature is defined as: 
\begin{equation}
\label{eq:LLYRicci}
    \kappa(x,y)=\lim_{\alpha \rightarrow 1}\frac{\kappa(x,y,\alpha)}{1-\alpha}
\end{equation}
By letting the distribution to concentrate in the extremities ($\lim_{\alpha \rightarrow 1}$), this definition is closer to the spirit of the initial Ricci curvature definition that looks at an infinitesimal neighborhood around a point. Moreover, the calculation of LLY curvature involves calculating a single $\alpha$-curvature for $\alpha \approx 1$. This results from the fact that $ \kappa(x,y,1)=0$, and the result proven in \cite{doi:10.1137/17M1134469} that $\kappa(x,y,\alpha)$ as a function of $\alpha$ is piece-wise linear with at most 3 linear parts, {\em i.e.},  the LLY curvature is the slope of the last linear part. It is notable that for a large majority of configurations the final values of $\kappa(x,y)$ and $\kappa(x,y,0)$ are equal, reconciliating the initial Ollivier's curvature definition and the LLY one. Last but not least, the LLY definition of curvature accepts a remarkable  limitless Kantorovitch dual formulation \cite{MUNCH2019106759} as:
\begin{equation}
\label{eq:Kanto}
\kappa(x,y)=\inf_{\substack{\nabla_{yx}f=1\\ f \in Lip(1)}} \nabla_{xy}\Delta f
\end{equation}
where the $\nabla_{yx}f=\frac{f(y)-f(x)}{d(x,y)}$ is the gradient of $f$ over the edge $(x,y)$, and $\Delta f$ the Laplacian function of $f$  defined as :
\begin{equation*}
\Delta f(s)=\frac{\sum_{t \in N(s)}{w_{ts}\left( f(t)-f(s)\right)}}{\sum_{t \in N(s)}{w_{ts}}}
\end{equation*}
The optimization is done over all Lipschitz(1) real-valued functions defined over the graph vertices $f:V\rightarrow \mathbbm{R}$ , {\em i.e.}, functions such that  $|f(s)-f(t)| \leq d(s,t)$, for all edges $(s,t)$, where $f(y)-f(x)=d(x,y)$. 

As the limitless Kantorovitch optimization has $|V|$ variables (the values of $f$ at each vertex), that is generally larger than $|\mathcal{S}|\times|\mathcal{D}|$ as in the formulation in Eq. \ref{eq:optTransport}, this formulation is generally not useful for calculating numerically the curvature in practice. However, it is fundamental for proving the convergence of the discrete Ricci flow (sec. \ref{sec:convergence}), as the classical Ollivier's definition is not amenable to a convergence proof. 

We will use throughout this paper, the LLY definition of curvature for graph edges, named Discrete Ricci Curvature (DRC). It is a notable difference with what has been done in the literature where the $0$-curvature and the $\frac{1}{2}$-curvature has been used to apply Ricci Curvature to practical problems \cite{10.1145/3489048.3522645, osti_10153644,Ni2019Community,Infocom}.  It is important to consider that the distribution defined in Eq. \ref{eqn:dist} makes it possible to integrate into the curvature both the structural properties of the graph and intrinsic properties of the underlying process through the edge's weight. By setting the weights to 1, the obtained curvature will only represent the graph structure. For all graphs, we have $-2\le \kappa(x,y)\le 2$. 

Let us look at three extreme cases to develop an intuition on what DRC represents.
\begin{itemize}
\item For a complete graph with $N$ vertex, all neighbors of $x$ are neighbors of $y$. So the optimal transport plan $\Theta^*$ has nothing to transport beyond a mass $\alpha-\frac{1 - \alpha}{N-1}$ from $x$ to $y$, resulting into $\kappa(x,y)=\frac{N}{N-1}>1$. The complete graph is the only graph with edge's distance equal to 1 and a constant Ricci curvature greater than 1 \cite{LLY}.
\item For a two-star network, with $N/2$ nodes each (one in the center and $\frac{N}{2}-1$ in the branches), connected by the edge (x,y), $\kappa(x,y) = -2$.
\item For a two-star network similar to the previous case, with centers connected to each other, with each star's branch being also directly connected to another branch of the other star, each branch can send its mass directly to its counterpart with a distance 1, resulting in $\kappa(x,y) = 0$.
\end{itemize}
We rarely see situations precisely like the ones described above, in real graphs. However, one can expect to see DRC larger than 0 in cases where the network is very densely connected, the DRC being negative, when there is a bottleneck, and the DRC being close to 0 when parallel paths exist. This motivates the use of DRC to evaluate the graph structure. 
\subsection{Ricci Flow}
\label{sec:ricciflow}
The Ricci flow over a smooth manifold $M$ with a Riemannian metric $g(t)$, introduced in \cite{Hamilton88}, is a non-linear Partial Differential Equation (PDE) that acts on the manifold through the metric $g(t)$:
\begin{equation}
\label{eq:ricciflow}
\frac{\partial g(t)}{\partial t}=-2\kappa^c(g(t))   
\end{equation}
The Ricci flow acts by deforming the distance metric $g(t)$ proportionally to its Ricci curvature, $\kappa^c(g(t))$, that depends itself on $g(t)$, {\em i.e.} it reduces the distance metric in the regions with positive curvature and increases in regions with negative curvature. This results into making the curvature of the manifold constant, (e.g, flat for Euclidean spaces,) by changing $\kappa^c(g(t))$. 

The Perelman-Poincar{\'e} theorem \cite{cao2006hamiltonperelmans} proves that when the Ricci flow is applied to a 3-manifold it converges to a composition of pieces each having one of 8 possible geometries. Three of these geometries are among the regular spaces (Einstein spaces): Euclidean, spherical, and hyperbolic, the 5 other geometric structures are a combination of regular space (through cartesian products). In the process of convergence, the Ricci flow might get into a countable number of singularities, {\em i.e.}, points where curvature diverges. These singularities are dealt with surgery, {\em i.e.} the manifold is cut along the singularities and split into several pieces over which the Ricci flow is pursued. The emergence of five geometric structures in addition to the regular manifold is coming from specific surgery cases. In a nutshell, among 3-manifolds, a large number is projected in a homeomorphic way into a regular space, {\em i.e.}, a  bijective and continuous function between the two spaces exists that has a continuous inverse function, through the Ricci flow, resulting in the two space being topologically equivalent. But for some other cases we need to apply surgery, and more complex geometries are possible.

In this paper, we are interested in the geometric representation of finite connected graphs. Similarly to the extension of Ricci curvature to the Discrete Ricci Curvature, we look for an extension of the Ricci flow to a discrete Ricci flow. We will investigate the existence of singularity and the need for surgery for discrete Ricci flows. It is also of major interest to know if regular constant curvature spaces are sufficient for embedding graphs or do we need more complex product space, like the 5 additional spaces needed for a 3-manifold. We will investigate these questions in forthcoming sections. 
\section{Discrete Ricci flow}
In \cite{Ollivier:2009}, an extension of the Ricci flow to graphs is proposed, it discretizes the PDE in Eq. \ref{eq:ricciflow}, by a discrete recurrence equation on the Ollivier version of DRC, $\kappa^k(x,y,0)$:
\begin{equation}
\label{eq:Olldiscricciflow}
d^{k+1}(x,y)=d^k(x,y)(1-\kappa^k(x,y,0))  
\end{equation}
with $d^k(x,y)$ and $\kappa^k(x,y,0)$ being the distance and the Ollivier curvature of an edge $(x,y)$ at the $k^{\textrm{th}}$ iteration. This definition has been used in \cite{Ni2019Community, Weber_1, 10.1007/978-3-030-04414-5_32, osti_10153644} to apply Ricci flow to different applications. However, there is no proof that the recurrence in Eq. \ref{eq:Olldiscricciflow} converges. In this paper, we will use the Discrete Ricci Curvature defined in Section \ref{sec:ricci} and a slightly different version of recurrence:
\begin{equation}
\label{eq:Olldiscricciflow2}
d^{k+1}(x,y)=d^k(x,y)\left(1-\frac{\kappa^k(x,y)}{2}\right)   
\end{equation}
Contrary to the recurrence given in Eq. \ref{eq:Olldiscricciflow}, we will prove that this recurrence converges to a fixed point(the distances of all the edges are Constance) where all edges of the graph have the same DRC:
\begin{equation}
     \forall (x,y) \in E, \;\lim_{k \rightarrow \infty}{\kappa^k(x,y)}=\kappa^*.
\end{equation} 
At each iteration $k$, the distance  $d(x,y)$ is stretched when the DRC is negative and squeezed when it is positive. Notably, the DRC is distance-scale- free, {\em i.e.}, if all distances over the graph are scaled by $\beta>0$, $d(x,y)\rightarrow \beta d(x,y)$, no edges' DRC change. Therefore for a constant DRC, the distances might still change. If  $\kappa^*>0$, resp. $\kappa^*<0$, the distance will continue to shrink to 0, resp. inflate with a constant rate.  We can leverage DRC being scale-free, to stabilize the distances by re-scaling them after each iteration to ensure that the average edge distances remain equal to 1. This results in the below recurrence equation that will be defined as the Discrete Ricci Flow:
\begin{equation}
\label{eq:discricciflow}
d^{k+1}(x,y)=\hat{d}^k(x,y)\left(1-\frac{\kappa^k(x,y)}{2}\right), 
\;
\hat{d}^{k+1}(x,y) =\frac{|E|d^{k+1}(x,y)}{\sum_{(s,t) \in E} d^{k+1}(s,t)}
\end{equation} 
We will assume here a state vector, $S^k$, of dimension $|E|$ containing the distances of all graph edges at step $k$,  ${\bf d}^k=(\hat{d}^k(e))_{e \in E}$. The state vector belongs to the simplex space, $\mathcal{S}=\left\{{\bf d}=(d_{e_1}, \ldots, d_{e_{|E|}}) |\sum_{e \in E}d_e = 1 \right\}$. The curvatures $\kappa^k(e)$ are directly derived using the state vector values $\hat{d}^{k}$. The "Discrete Ricci Flow" $\Gamma$ (DRF), which integrates equations relative to all edges is defined as the map :
\begin{equation}
\label{eq:gamma}
\hat{{\bf d}}^{k+1}=\Gamma(\hat{{\bf d}}^k)
\end{equation}
The following lemma shows that the discrete Ricci flow validates the triangular inequality:
\begin{lem}
When the metric $\hat{{\bf d}}^k$ at step $k$ validates triangular inequality, $\hat{{\bf d}}^{k+1}$ the metric after application of DRF validates triangular inequality too.
\end{lem}
\begin{proof}
Let's suppose that $\hat{d}^k(x,y)+\hat{d}^k(y,z)\ge \hat{d}^k(x,z)$ for all triangles $xyz$. Let's use the limitless Kantorovitch formulation of the curvature. let's $F \in Lip^{d}(1)$ be the function that attains the minimum for calculating $\kappa^k(x,z)$. We have $F(z)-F(x)=\hat{d}^k(z,x)$. We can write:
$$
\hat{d}^k(x,z)-\frac{1}{2}\kappa^k(x,z)\hat{d}^k(x,z)=\hat{d}^k(x,z)-\frac{1}{2}\left(\Delta F(x)-\Delta F(z)\right)
$$
Let $\alpha_{xu}= \frac{w_{xu}}{\sum_{u \in N(x)}}$, $\gamma_{zu}= \frac{w_{zu}}{\sum_{u \in N(z)}}$ and $\beta_{yu}= \frac{w_{yu}}{\sum_{u \in N(y)}}$, then $\sum_{u \in N(x)}\alpha_{xu}=\sum_{u \in N(z)}\gamma_{zw}=\sum_{u \in N(y)}\beta_{yu}=1$, we can write $\hat{d}^k(x,z)=\frac{1}{2}\sum_{u \in N(x)}{\alpha_{xu}\left(F(z)-F(x)\right)}+\frac{1}{2}\sum_{w \in N(z)}{\gamma_{zw}\left(F(z)-F(x)\right)}$. Therefore:
\begin{equation}
\label{eq:d1}
    \hat{d}^k(x,z)-\frac{1}{2}\kappa^k(x,z)\hat{d}^k(x,z)=\frac{1}{2}\left(\sum_{w \in N(z)}{\gamma_{zw}\left(F(w)-F(x)\right)}-\sum_{u \in N(x)}{\alpha_{xu}\left(F(u)-F(z)\right)} \right)
\end{equation}
We can use a similar approach to rewrite
\begin{equation}
\label{eq:d2}
\hat{d}^k(x,y)-\frac{1}{2}\kappa^k(x,y)\hat{d}^k(x,y)=\frac{1}{2}\left(\sum_{v \in N(y)}{\beta_{yv}\left(G(v)-G(x)\right)}-\sum_{u \in N(x)}{\alpha_{xu}\left(G(u)-G(y)\right)} \right)
\end{equation}
where $G$ is the function attaining the minimum for edge $(x,y)$,
\begin{equation}
\label{eq:d3}
\hat{d}^k(y,z)-\frac{1}{2}\kappa^k(y,z)\hat{d}^k(y,z)=\frac{1}{2}\left(\sum_{w \in N(z)}{\gamma_{zw}\left(H(w)-H(y)\right)}-\sum_{v \in N(y)}{\beta_{yv}\left(H(v)-H(z)\right)} \right)
\end{equation}
where  $H$ is the function attaining the minimum for edge $(y,z)$. It is noteworthy that $F$,$G$, and $H$ are all three in $Lip^{d}(1)$. Moreover, it is notable that the term on the right-hand side of Eq. \ref{eq:d1} does not depend on $\hat{d}(x,z)$ as the term $(F(z)-F(x))$ never appear in the summations. Similarly Eq. \ref{eq:d2} does not depend on $\hat{d}(x,y)$, and  Eq. \ref{eq:d3} does not depend on $\hat{d}(y,z)$. 

Now if we replace $G$ in Eq. \ref{eq:d2}, and $H$ in Eq. \ref{eq:d3}, by the function $F$, we are decreasing both of the right-hand sides so that:
\begin{dmath*}
\hat{d}^{k+1}(x,y)+ \hat{d}^{k+1}(y,z)\ge \frac{1}{2}\left(\sum_{v \in N(y)}{\beta_{yv}\left(F(v)-F(x)\right)}-\sum_{u \in N(x)}{\alpha_{xu}\left(F(u)-F(y)\right)} \right)+\frac{1}{2}\left(\sum_{w \in N(z)}{\gamma_{zw}\left(F(w)-F(y)\right)}-\sum_{v \in N(y)}{\beta_{yv}\left(F(v)-F(z)\right)} \right)
\end{dmath*}
By re-arranging the equations and removing the common summation term we have:
$$
\hat{d}^{k+1}(x,y)+ \hat{d}^{k+1}(y,z)\ge\frac{1}{2}\left(\left(F(z)-F(x)\right)+\sum_{w \in N(z)}{\gamma_{zw}\left(F(w)-F(y)\right)}-\sum_{u \in N(x)}{\alpha_{xu}\left(F(u)-F(y)\right)} \right)
$$
Now we can write :
$$
F(z)-F(x)=\sum_{w \in N(z)}{\gamma_{zw}\left(F(y)-F(x)\right)}-\sum_{u \in N(x)}{\alpha_{xu}\left(F(y)-F(z)\right)}
$$
By replacing this term in the above inequality we have :
$$
\hat{d}^{k+1}(x,y)+ \hat{d}^{k+1}(y,z)\ge\frac{1}{2}\left(\sum_{w \in N(z)}{\gamma_{zw}\left(F(w)-F(x)\right)}-\sum_{u \in N(x)}{\alpha_{xu}\left(F(u)-F(z)\right)} \right)=d^{k+1}(x,z)
$$
\end{proof}
In this paper, we are interested into checking if the DRF converges to a fixed point, {\em i.e.}, a configuration where the distances remain constant, $\hat{d}^{k+1}(e)=\hat{d}^k(e)$. Nonetheless, at this fixed point, if it exists, the DRC will be constant over all edges as the only way for the distance to remain stable is for the DRC of all edges to be constant so that the inflation of distance is canceled by the rescaling. 

As described in sec. \ref{sec:ricciflow}, Ricci flow over manifolds, might stumble upon singularities that will need surgery. These singularities happen when curvature becomes infinite, {\em i.e.}, when a part of the manifold becomes a ball or a cylinder, with a radius going to 0. However, the discrete curvature defined in this paper is bounded, and it cannot diverge. The case where a distance goes to 0 can be considered as problematic. This will not break the convergence of the distance vector to a stable value, as when a distance becomes 0, it remains 0. Even when the distance goes to 0, the DRC defined in Eq. \ref{eq:LLYRicci} or in Eq. \ref{eq:Kanto}, is still well-defined. Therefore, in the process of deriving the DRF, we formally do not need surgery. This is an important difference between our context with classical manifold setting.

This also goes on contrary, to the approach advocated in \cite{Sia2019OllivierRicci, osti_10153644}, which applies surgery when distances go beyond a threshold. While one can still decide to sever some links to implement the clustering, this severing should not be motivated by surgery.

In the next section, we will prove that the recurrence in Eq. \ref{eq:discricciflow} converges to a unique fixed point.

\section{Convergence and Embedding}
\label{sec:convergence&embedding}

\subsection{Convergence}
In the forthcoming, we will use the  $L_\infty$ metric over the simplex space defined as $L_{\infty}({\bf d}, {\bf d'})=\max_{e \in E} |d'(e)-d(e)|$. The simplex space along with the $L_{\infty}$ metric define a complete metric space.




\begin{lem}\label{lemma:1}
Let $G=(V, E,w)$ be a weighted graph, and $e=(x,y)\in E$ an edge in the graph. For any two distance vectors ${\bf d}$ and ${\bf d}'$, such that ${\bf d}$ and ${\bf d}'$ differ only on edge $(x,y)$ and $d(x,y) \leq d'(x,y)$, and for any function $f\in Lip(1)$, such that $f(y)-f(x) = d(x,y)$ there exists a function $f' \in Lip(1)$, such that $f'(y)-f'(x) = d'(x,y)$, and $0\leq f'(z)-f(z)  \leq d'(x,y) - d(x,y)$ for all $z$ in $V$.
\end{lem}
\begin{proof}
The lemma states that for a function $f \in Lip(1)$, that validates the distance constraint on the edge $(x,y)$, $f(y)-f(x)=d(x,y)$, there exists an approximation $f' \in Lip(1)$, which validates the distance constraint $f'(y)-f'(x)=d'(x,y)$.

We know that $f$ is in $Lip(1)$, so $|f(s) -f(t)| \leq d(s,t)$ for all $(s,t)$ in $E$. 
Now, let us define the function $\mathcal{F}$ as:
$$
\mathcal{F}(z) = \left\{
    \begin{array}{ll}
        \ f(x) + d'(x,y) & \mbox{if}\; z = y \\
         f(z) & \mbox{otherwise}
    \end{array}
\right.
$$
By definition of $\mathcal{F}$, we have $\mathcal{F}(y) - \mathcal{F}(x) = d'(x,y)$, so $\mathcal{F}$ validate the distance constraint on edge $(x,y)$. Moreover,  $\mathcal{F}(y) - f(y) = f(x)-f(y)+ d'(x,y) = d'(x,y) - (f(y) - f(x)) = d'(x,y) - d(x,y)$
, and obviously $\mathcal{F}(z)-f(z)=0$ for $z\neq y$ and therefore $\forall z \in V,\; 0\leq \mathcal{F}(z)-f(z) \leq d'(x,y) - d(x,y)$.

If the function $\mathcal{F}$ is not in $Lip(1)$ there exists a set of $n$ neighbors of vertex $y$  $\{z_1, \ldots, z_n\}$, indexed such that $\mathcal{F}(z_1)\le \mathcal{F}(z_2)\le \ldots\le \mathcal{F}(z_n)<\mathcal{F}(y)$, and $|\mathcal{F}(z_i)-\mathcal{F}(y)|>d'(z_i,y)$. Let $Z=\{z_1,\ldots,z_n,y\}$. It is notable that the function $\mathcal{F}$ is only different from $f$ on vertex $y$, therefore, it inherits from $f$ the property $Lip(1)$ for all other edges beyond those with extremities in $Z$. 

We know $\mathcal{F}(z_i)-\mathcal{F}(y)=\mathcal{F}(z_i)-\mathcal{F}(x)+\mathcal{F}(x)-\mathcal{F}(y) \le d'(x,z_i)-d'(x,y)\le d'(z_i,y)$ by the triangular inequality. So we have
$\mathcal{F}(z_i)-\mathcal{F}(y)<-d'(z_i,y)$. 

Now, let us define a function $\mathcal{F}'$ from $\mathcal{F}$, as $\mathcal{F}'(v)=\mathcal{F}(v)$ for all vertices beyond $z_i$ where $\mathcal{F}'(z_j)=\mathcal{F}(z_j)+\delta_j$. We select $\delta_j>0$ such that $\mathcal{F}'(z_j)=\mathcal{F}(z_j)+\delta_j>\mathcal{F}(y)-d'(y,z_j)$, resulting in $\mathcal{F}'$ being $Lip(1)$ on all vertices $z_j$. The additional value $\delta_j$ is upper bound by $d'(x,y)-d(x,y)$. It is also lower-bounded as $\delta_j \geq \mathcal{F}(y)-\mathcal{F}(z_j)-d'(z_j,y)$, as this is the minimal value to add to achieve the $Lip(1)$ property on edge $(z_j,y)$. 

However, changing the values of $\mathcal{F}(z_j)$ might impact $Lip(1)$ property for any edge connected to $z_j$. We know that $\mathcal{F}(y)-\mathcal{F}(w) \le d'(w,y)$, for $w \notin {Z}$, 
Let's consider a vertex $w \in N(y)\backslash Z$ connected to $z_j$. As $\mathcal{F}$ is $Lip(1)$, there are three cases: $\mathcal{F}(z_j)-\mathcal{F}(w)=d'(z_j,w)$, or $\mathcal{F}(w)-\mathcal{F}(z_j)=d'(z_j,w)$  $|\mathcal{F}(z_j)-\mathcal{F}(w)|< d'(z_j, w)$. In the first case,
$\mathcal{F}(w)-\mathcal{F}(y)=\mathcal{F}(z_j)-d'(z_j,w)-\mathcal{F}(y)<-d'(z_j,y)-d'(z_j,w)<-d'(w,y)$, but this results in a contradiction as this means that $\mathcal{F}$ is not $Lip(1)$ over $(w,y)$, and we assumed that $w \notin Z$. For the second case, $\mathcal{F}(w)-\mathcal{F}(z_j)=d'(z_j,w)$, we have $\mathcal{F}(w)-\mathcal{F}'(z_j)=\mathcal{F}(w)-\mathcal{F}(z_j)-\delta_j<d'(z_j,w)$, so that we do not need to add anything to $\mathcal{F}(w)$ to make $\mathcal{F}' \in Lip(1)$ for the edge $(w,z_j)$. This is similar for the third case, where $|\mathcal{F}(z_j)-\mathcal{F}(w)|< d'(z_j, w)$. Therefore, setting $\mathcal{F}'(w)=\mathcal{F}(w)$ the function $\mathcal{F}'$ remains in $Lip(1)$ on all edges $(w,z_j)$ with $w \in N(y)-Z$. 

The last part that we have to take care of in the neighborhood of $y$ is to ensure that the property $Lip(1)$ is valid among all edges $(z_i,z_j)$ where both extremities are in $Z$. For this purpose, we will let $0<\delta_j-\delta_i<\mathcal{F}(z_i)-\mathcal{F}(z_j)+d'(z_i,z_j)$. We can ensure this by selecting the highest value of $\delta_i$ in decreasing order, {\em i.e.}, beginning with $\delta_t$, $\delta_{t-1}$, up to $\delta_1$. This guarantees $\mathcal{F}'$ is in $Lip(1)$ for all edges with both extremities in $Z$. 

We have therefore proven that there exists a set of values $\{ \delta_i\}$, such that the function $\mathcal{F}'$ defined above is $Lip(1)$ for all edges of the graph. Moreover, the function $\mathcal{F}'$ validates constraints over $(x,y)$, and the approximation constraint is validated on all vertices as $\mathcal{F}'(z)-f(z)=0$ for all $z \notin Z$, and for $z_i \in Z$ we have $\mathcal{F}'(z_i)-f(z_i)=\delta_i$, where $d'(x,y)-d(x,y)\le \delta_i \le 0$, {\em i.e.}, $0 \le \mathcal{F}'(z_i)-f(z_i)\le d'(x,y)-d(x,y)$ so that $\mathcal{F}'$ validates the approximation constraint.
\end{proof}

\begin{lem}\label{lemma:2}
Let $G=(V, E,w)$ be a weighted graph, and $e=(x,y)\in E$ an edge in the graph. For any two distance vectors ${\bf d}$ and ${\bf d}'$, such that ${\bf d}$ and ${\bf d}'$ differ only on edge $(x,y)$ and $d(x,y) \leq d'(x,y)$, and for any function $g\in Lip(1)$, such that $g(y)-g(x) = d'(x,y)$, there exists a function $g' \in Lip(1)$, such that $g'(y)-g'(x) = d(x,y)$, and $d(x,y) - d'(x,y)\leq g'(z)-g(z) \leq 0$ for all $z$ in $V$ .
\end{lem}
\begin{proof}
The proof of this lemma follows a similar logic as the previous one, and can be found in Appendix \ref{appendix:proof}
\end{proof}

\newtheorem{thm}[lem]{Theorem}
\begin{thm}
   Let $G=(V, E,w)$ be a weighted graph. For any edge $(x,y) \in E$, the discrete Ricci flow, defined in Eq. \ref{eq:gamma} as:
   \begin{equation}
       \hat{{\bf d}}^{k+1}=\Gamma(\hat{{\bf d}}^k),
   \end{equation}
    converge to a unique fixed point where all edges have the same discrete Ricci curvature, $\kappa^{\hat{{\bf d'}}}(x,y)=\kappa^{\hat{\bf d'}}_{\infty}$, and the distance of each edge is stable, $\hat{d}^{k+1}(x,y)=\hat{d}^k(x,y)$. 
\end{thm}
\begin{proof}
To prove the convergence of the recurrence, we will use the Banach fixed point theorem, which states that a contraction mapping defined over a complete metric space admits a unique fixed point. We therefore have to prove that the map $\Gamma(.)$ is contractive over the simplex space with the metric $L_{\infty}$.
Therefore, we need to prove that $\Gamma$ is contractive, {\em that is,}, $\epsilon =L_{\infty}(\hat{{\bf d}}, \hat{{\bf d'}}) \geq L_{\infty}(\Gamma(\hat{{\bf d}}), \Gamma(\hat{{\bf d'}}))$. 

For proving the theorem over two general distance vectors ${\bf d}$ and ${\bf d'}$, we consider a chain of intermediate steps $\hat{{\bf d}}=\tilde{{\bf d}_0}\rightarrow \tilde{{\bf d}_1} \rightarrow \ldots \rightarrow \tilde{{\bf d}_n}= {\bf d'}$, where $\tilde{{\bf d}_i}$ is a vector that differ with $\tilde{{\bf d}_{i-1}}$ in a single coordinate, $j=\arg \max |\tilde{d}_{i-1}(k)-\tilde{d_{i}}(k)|$. We will first prove that the function $\Gamma'()$, the $\Gamma$ mapping without normalization, is contractive over each step $\tilde{{\bf d}}_{i-1}\rightarrow \tilde{{\bf d}}_i$, {\em, that is,}, $L_{\infty}\left(\tilde{{\bf d}}_{i-1}, \tilde{{\bf d}}_{i}\right) \geq L_{\infty}(\left(\Gamma'(\tilde{{\bf d}}_{i-1}), \Gamma'(\tilde{{\bf d}}_{i})\right)$.  

In order to simplify the notation in the coming, we will denote $\tilde{\bf d}_{i-1}$ as $\tilde{\bf d}$, $\tilde{\bf d}_{i}$ as $\tilde{\bf d}'$. Let $e = (x,y)$ be the edge of the graph, where $\tilde{{\bf d}}$ and $\tilde{{\bf d}}$ differ. Without loss of generality, we will assume that $\tilde{d}'(x,y) \geq \tilde{d}(x,y)$,  $0 \leq \tilde{d}'(x,y)-\tilde{d}(x,y) = \epsilon_{0}$. 
 We have the following:
\begin{equation*}
|\Gamma(\tilde{d}'(x,y))-\Gamma(\tilde{d}(x,y))|=\left|\left(\tilde{d}'(x,y)-\frac{\kappa^{{\bf \tilde{d}'}}(x,y)}{2}\tilde{d}'(x,y)\right) -\left(\tilde{d}(x,y)-\frac{\kappa^{{\bf \tilde{d}}}(x,y)}{2}\tilde{d}(x,y)\right) \right|
\end{equation*}
{\bf First case}: we suppose that $\Gamma(\tilde{d}'(x,y)) \geq \Gamma(\tilde{d}(x,y)$. Therefore:
\begin{equation*}
\label{eq:1devmapping}
\begin{aligned}
 \Gamma(\tilde{d}'(x,y))- \Gamma(\tilde{d}(x,y))& = \left(\tilde{d}'(x,y)-\frac{\kappa^{{\bf \tilde{d}'}}(x,y)}{2}\tilde{d}'(x,y)\right) - \left(\tilde{d}(x,y)-\frac{\kappa^{{\bf \tilde{d}}}(x,y)}{2}\tilde{d}(x,y)\right)  \\
& = (\tilde{d}'(x,y) - \tilde{d}(x,y)) - \frac{1}{2}\left(\tilde{d}'(x,y)\kappa^{{\bf \tilde{d}'}}(x,y) -  \tilde{d}(x,y)\kappa^{{\bf \tilde{d}}}(x,y)\right) \\
& = \epsilon_{0} - \frac{1}{2}\left(\tilde{d}'(x,y)\kappa^{{\bf \tilde{d}'}}(x,y) - \tilde{d}(x,y)\kappa^{{\bf \tilde{d}}}(x,y)\right) \\
& = \epsilon_{0} - \frac{1}{2}\left(\tilde{d}'(x,y)\inf_{\substack{\nabla_{yx}f=1\\ f \in Lip^{{\bf d}'}(1)}} \nabla_{xy}\Delta f
- 
\tilde{d}(x,y)\inf_{\substack{\nabla_{yx}g=1\\ g \in Lip^{\bf d}(1)}} \nabla_{xy}\Delta g\right)\\
\end{aligned}
\end{equation*}
Let us set $I = \tilde{d}'(x,y)\inf_{\substack{\nabla_{yx}f=1\\ f \in Lip^{{\bf d}'}(1)}} \nabla_{xy}\Delta f \hspace{3mm}-\hspace{3mm} \tilde{d}(x,y)\inf_{\substack{\nabla_{yx}g=1\\ g \in Lip^{\bf d}(1)}} \nabla_{xy}\Delta g.$
%
Let $F \in Lip(1)$ be the function that reaches the minimum in the first term above, {\em, that is,} $F(y)-F(x)=\tilde{d}'(x,y)$, and $\inf_{\substack{\nabla_{yx}f=1\\ f \in Lip^{\bf d}(1)}} \nabla_{xy}\Delta f= \Delta F(x) - \Delta F(y).$
Through Lemma \ref{lemma:2}, there exists a function $f' \in Lip(1)(1)$ such that $f'(y) - f'(x) = \tilde{d}(x,y)$ and $f'(z) - F(z) \leq \tilde{d}'(x,y) - \tilde{d}(x,y)$ for all $z$ in $V$. This function validates therefore constraints relative to the second minimization in the above formula. Putting $f'$ in place of $g$ in $I$ we will have:
\begin{equation*}
\begin{aligned}
I & \geq (\Delta F(x) - \Delta F(y)) - (\Delta f'(x) - \Delta f'(y))  \\
& \geq (\Delta F(x) - \Delta f'(x)) - (\Delta F(y) - \Delta f'(y)) 
\end{aligned}
\end{equation*}
We can replace the Laplacian terms with their expansions. with  $\alpha_{xz}=\frac{w_{xz}}{\sum_{t \in N(x)}wxt}$ and $\beta_{yv}=\frac{w_{yv}}{\sum_{t \in N(y)}w_{yt}}$.
\begin{equation*}
\begin{aligned}
I & \geq \left(\sum_{z\in N(x)}{\alpha_{xz}(F(z) - F(x))} - \sum_{z\in N(x)}{\alpha_{xz}(f'(z) - f'(x))}\right)- \left(\sum_{v\in N(y)}{\beta_{yv}(F(v) - F(y))} - \sum_{v\in N(y)}{\beta_{yv}(f'(v) - f'(y))}\right)\\
& \geq \left(\sum_{z\in N(x)}{\alpha_{xz}(F(z) - f'(z))} - \sum_{z\in N(x)}{\alpha_{xz}(F(x) - f'(x))}\right) - \left(\sum_{v\in N(y)}{\beta_{yv}(F(v) - f'(v))} - \sum_{v\in N(y)}{\beta_{yv}(F(y) - f'(y))}\right)\\
&\geq \left(\sum_{z\in N(x)}{\alpha_{xz}(F(z) - f'(z))} - (F(x) - f'(x))\right)-\left(\sum_{v\in N(y)}{\beta_{yv}(F(v) - f'(v))} - (F(y) - f'(y))\right)\\
&\geq \left(\sum_{z\in N(x)}{\alpha_{xz}(F(z) - f'(z))}
- \sum_{v\in N(y)}\beta_{yv}(F(v) - f'(z))\right) + (F(y) - F(x))-(f'(y)- f'(x))\\
&\geq \left(\sum_{z\in N(x)}{\alpha_{xz}(F(z) - f'(z))}
- \sum_{v\in N(y)}\beta_{yv}(F(v) - f'(v))\right) + (\tilde{d}'(x,y) - \tilde{d}(x,y))\\
&\geq \left(\sum_{z\in N(x)}{\alpha_{xz}(F(z) - f'(z))}
- \sum_{v\in N(y)}\beta_{yv}(F(v) - f'(v))\right) + \epsilon_0\\
\end{aligned}    
\end{equation*}
Now we can use the lemma approximation bounds.  We know from lemma \ref{lemma:2}, $0 \geq F(z)-f'(z)\geq -\epsilon_0$ By replacing these values in the above relations we have 
\begin{equation*}
        I \ge  \left(\sum_{z\in N(x)}{\alpha_{xz}(-\epsilon_0}) - \sum_{v\in N(y)}\beta_{yv}(0)\right) + \epsilon_0 \ge 0
\end{equation*}
This results into 
\begin{equation*}
        0 \leq \Gamma(\tilde{d}'(x,y)) - \Gamma(\tilde{d}(x,y))\leq \epsilon_{0} - \frac{I}{2}\le \epsilon_0\\
\end{equation*}
This proves that when $\Gamma(\tilde{d}'(x,y)) \geq \Gamma(\tilde{d}(x,y)$, $\Gamma$ is a contractive map. 

{\bf Second case}:  $\Gamma(\tilde{d}'(x,y)) < \Gamma(\tilde{d}(x,y)$. We follow the same approach as in the first case. 
$$0 < \Gamma(\tilde{d}(x,y)) - \Gamma(\tilde{d}'(x,y))= -\epsilon_0+\frac{I}{2}$$
where $I$ is as defined for the first case. 
Let $G$ be the function that achieves the infimum in $\inf_{\substack{\nabla_{yx}g=1\\ g \in Lip(1)}} \nabla_{xy}\Delta g = \Delta G(x) - \Delta G(y)$ and $G(y) - G(x) = \tilde{d}(x,y)$.
Using the lemma \ref{lemma:1}, there exists a function $g'$ in $Lip(1)$ such that $g'(y) - g'(x) = \tilde{d}'(x,y)$ and $0\leq G(z) - g'(z) \leq \tilde{d}'(x,y) - \tilde{d}(x,y)$ for all $z$ in $V$. If we replace $g'$ in the first term of $I$ and $G$ in the second term we have 
\begin{equation*}
I\leq \left(\Delta g'(x) - \Delta g'(y)) - (\Delta G(x) - \Delta G(y))\right).
\end{equation*}
 By applying the same set of operations that in case 1 on the term I we have :
\begin{equation*}
I \le  \left(\sum_{z\in N(x)}\alpha_{xz}(G(z) - g'(z))
- \sum_{v\in N(y)}\beta_{yv}(G(v) - G'(v))\right)+\epsilon_0
\end{equation*}
By using the approximation bound in lemma\ref{lemma:1}, the functions $g'$ and $G$ satisfy, $0\leq G(z) - g'(z) \leq \epsilon_0$. We can therefore upper bound $I$ by:
$$
I \le \left(\sum_{z\in N(x)}\alpha_{xz}(\epsilon_0)
- \sum_{v\in N(y)}\beta_{yv}(0)\right) +\epsilon_0=2\epsilon_0
$$ 
By replacing the above upper bound on $I$  in the term for $\Gamma'(\tilde{d}(x,y)) - \Gamma'(\tilde{d}'(x,y))$, we have $0 < \Gamma'(\tilde{d}(x,y)) - \Gamma'(\tilde{d}'(x,y))\le 0$ resulting into a contradiction. Therefore the second case cannot happens. 

We have already proven that for the first case the map $\Gamma'$ is contractive. That proves that the function $\Gamma'()$ is contractive at each step $\tilde{{\bf d}}_{i-1}\rightarrow \tilde{{\bf d}}_i$.  Now, if we apply the rescaling to the final value ${\bf \tilde{d}}$, we should evaluate the value $ L_{\infty}\left(\Gamma(\hat{{\bf d}})- \Gamma(\hat{\bf d}')\right)$. When comparing $\sum_{(s,t) \in E} \Gamma'(d(s,t))$ and $\sum_{(s,t) \in E} \Gamma'(d'(s,t))$, there are two cases in which it can happen: $\sum_{(s,t) \in E} \Gamma'(d(s,t)) \ge \sum_{(s,t) \in E} \Gamma'(d'(s,t))$ or $\sum_{(s,t) \in E} \Gamma'(d(s,t)) < \sum_{(s,t) \in E} \Gamma'(d'(s,t))$. Let's assume that we are in the first case. We can state for any edge, $(u,v)$, that :
$$
\frac{|\Gamma'(d(u,v))-\Gamma'(d'(u,v))|}{\sum_{(s,t) \in E} \Gamma'(d'(s,t))} \le
\left|\frac{\Gamma'(d(u,v))}{\sum_{(s,t) \in E} \Gamma'(d(s,t))}- \frac{\Gamma'(d'(u,v))}{\sum_{(s,t) \in E} \Gamma'(d'(s,t))}\right| \le \frac{|\Gamma'(d(u,v))-\Gamma'(d'(u,v))|}{\sum_{(s,t) \in E} \Gamma'(d(s,t))}
$$
However, we proved before that $\Gamma'()$ is contractive,, {\em i.e.}, $|\Gamma'(d(u,v))-\Gamma'(d'(u,v))|<\epsilon$, so that $ L_{\infty}\left(\Gamma(\hat{{\bf d}})- \Gamma(\hat{\bf d}')\right) <\epsilon$.

That proves that $\Gamma$ is a contractive mapping over the simplex $S$ and therefore $\Gamma$ converge toward a single fixed point thanks to Banach's fixed point theorem. That concludes the proof; {\em i.e.}, the curvature and the distance vector both converge to a fixed point.
\end{proof}

\subsection{Embedding of the graph}
\label{sec:embedding}
In this section, we will leverage the convergence of the DRF, and show its implications for graph embedding. The DRF provides, after the convergence of the Ricci flow, the asymptotic curvature, $\kappa_{\infty}$, that is, the curvature of the embedding space of homogeneous constant curvature. However, for all analyzed graphs, we obtain a negative curvature ($\kappa_{\infty}<0$),{\em i.e.},the embedding space is  $\mathbb{H}^K(\kappa_{\infty})$, the $k$-dimensional hyperbolic plane with curvature $\kappa_{\infty}$. It is noteworthy that while DRF gives the curvature $\kappa_{\infty}$, it does not give the dimension $k$. There is a rich literature on hyperbolic embedding of graphs that is motivated by application {\em e.g.}, capture of graph hierarchy information \cite{Chamberlain2017NeuralEO, NIPS2017_59dfa2df}, application to routing in computer networks \cite{4215803}, or to structural analysis of social networks \cite{10.1145/2582112.2582139}. 

There are also more theoretical reasons for making hyperbolic planes suitable for embedding graphs. First, it is known that any tree can be embedded isometrically (without any distortion), through the Sarkar’s Construction \cite{10.1007/978-3-642-25878-7_34}, in the hyperbolic plan $\mathbbm{H}$.However, graph are not trees in general. Nevertheless, it has been shown that any configuration of points in $\mathbb{H}^K$ can be approximated at large scale, or better said, when we look at the hyperbolic space from 'infinitely far away', by a tree \cite{HAMANN_2018}. The larger the scale, the better the approximation will be.  

Before formalizing the above approximation and proposing an embedding algorithm, let us differentiate our setting from the classical one. In the classical setting, we have a graph $G=(V,E,W)$ and we would like to embed it into a manifold. There are empirical reason to assume that an hyperbolic embedding into the $\mathbb{H}$ is a suitable one, therefore we look for the best hyperbolic embedding, {\em i.e.}, $f:V\rightarrow \mathbbm{H}^K$, that will minimize a distortion criterion that will depend on the difference between $w(v_i,v_j)$ and $d(f(v_i), f(v_j))$ for all $(v_i,v_j) \in E$. This problem, that is well-formalized in \cite{cvetkovski2016multidimensional, pmlr-v80-sala18a}, has attracted a large literature. Our setting is relatively different as we already know that the graph is embedded in $\mathbbm{H}^K(\kappa)$ with a know curvature $\kappa$, and we have already assigned a to each edge $(v_i,v_j)$ a distance metric $d(v_i,v_j)$ in $\mathbbm{H}^K$. However, as the real dimension $K$ of the space is unknown, we are looking for an approximate projection in a lower dimensional hyperbolic plan. As explained before, trees are perfectly embedded into $\mathbbm{H}$. In his seminal work on hyperbolic space, Gromov has defined the concept of $\delta$-hyperbolicity \cite{Gromov}, that evaluate the difference of a given graph with a tree, that will be a $0$-hyperbolic structure. Intuitively, the $\delta$-hyperbolicity measure how different can the sum of the lengths of two edges of a triangle from the lengths of the third edge. While for a graph embedded into an euclidean space, $\delta$ can be unbounded, for a graph embedded into $\mathbbm{H}^K(\kappa)$, $\delta < \log\left(1+\sqrt{\frac{2}{\kappa}}\right)$, in other term graph embedded into $\mathbbm{H}^K(\kappa)$ are more tree-like than when embedded in Euclidean plan. Now, the Ricci flow has this remarkable property that over the course of it execution

first setting, we have a graph and we look for a visualization

Let us formalize this.

More precisely, one can build over a set of points embedded into the hyperbolic plan a tree such that the paths over it are "quasi-geodetic", {\em i.e.}, and the length of the path over the tree is almost equal to the distance between the points. 

Hamann [22] showed that we can construct a rooted R-tree T inside Hk, such that every geodesic ray in Hk eventually converges to a ray of T . Thus, showing that any configuration of points at scale in Hk can be approximated by a tree. Additionally, larger the scale points can be better approximated by trees

Hyperbolic graph embedding 

The DRF provides a metric space over the graph nodes; {\em i.e.}, point to point distances between graph vertices, along with the curvature of a constant curvature space where the graph is embedded after the convergence of the DRF. However, the dimension of the embedding space is not given by the DRF.

For all graphs we analyzed, the limit curvature obtained was negative, meaning that embedding in a hyperbolic space should.

show that an isometric embedding into a regular constant curvature space, with curvature given as the asymptotic constant DRC of the graph, exists that will match precisely the distances assigned to edges. We will call this embedding the discrete flow Ricci graph embedding (dfRgE).
\begin{thm}
A weighted graph $G=(V, E, w)$ after convergence of the Continuous Ricci Flow, can be embedded isometrically, {\em i.e.} with points being at the same distance as edges distance, into a regular constant curvature manifold with curvature equal to the asymptotic constant curvature after DRF. 
\end{thm}
\begin{proof}
Let us first consider the simple case of connected graphs with 3 edges. All the 3 types of such graphs (star, triangle, and alignment) can be embedded in a 2-dimensional regular space. For example, a triangle with known edge distance $a$, $b$, and $c$, can be placed on a plane of 0 curvature space, using a compass through a high school-level scheme. We choose arbitrarily the position of the first point $C$, the second point $B$ can be anywhere on a circle of radius $a$ centered at $C$, and the point $A$ can be positioned on one of the intersection points of the circle centered in $C$ with radius $b$, and the circle centered in $B$ with radius $c$. This scheme works as long as the triangular inequalities are valid. The main ingredient that makes this method work is that the plane space binds the angles and length of the triangle together. This extends to a spherical space with positive curvature, resp. hyperbolic space with negative curvature, under the condition of using the corresponding spherical, resp. hyperbolic, circles.

We can extend this to larger graphs through induction. Let’s assume that, for any graph in the class of graphs with degree at most $K$, we can provide an embedding in a regular space of constant curvature of dimension less than $K$. Now, let us add to any one of these graphs of degree at most $K$ some vertices and edges so that we have a graph with degree at most $K+1$. In this new graph, any vertex is connected to at most $K+1$ other vertices. This vertex can be positioned in a $(K+1)$-dimension regular constant curvature space, at the intersection of $(K+1)$ hyperspheres of dimension $K$ centered on each one of the neighbors. This intersection will have at least one intersection point as long as all the $K+1$ points validate triangular inequalities. That works as in constant curvature space where the hyper-sphere is well defined, and the intersection of $K$ hyper-spheres of dimension $K$ has at least two intersection points. This is not the case for manifolds where you cannot guarantee that circles will intersect (because of lack of homogeneity and isotropy).
\end{proof}
The above proof proves the consistency of the metrics and the obtained curvature, with an embedding in a regular constant curvature manifold, and therefore its existence. However, it does not provide a construction for it, {\em, i.e.}, a way of finding coordinates of individual vertices in the space. Although it is straightforward to calculate the coordinates of the intersection of $K$ hyperspheres (using the correct hypersphere formula, which depends on the curvature value). However, the final coordinates of vertices depend on their order of embedding. A greedy algorithm could begin with any vertices, position it in the space, and derive the position of the next vertices depending on the position of the first, {\em etc.} However, downstream we might get into a configuration where previously positioned vertices are incompatible with the new distance constraints coming from new vertices. In a nutshell, while the embedding exists, finding the right ordering of vertices, to precisely characterize it, becomes an NP-hard problem. Nonetheless, this is not a big deal, as in most applications, we need the distance between vertices rather than their precise coordinates in space. This information is already provided by the DRF, and we know that these distances can be used to predict other distances, for example, using the cosine theorem.  

In fact, any metric distance over the edges can be embedded in a $N$ dimensional Euclidean space through spectral projections. But among the infinity of distance metric, the ones we derived through the DRC are compatible with the subset of embedding space represented by the graph vertices and edges being regular and of constant curvature. The main contribution of our paper is that we generate such a set of distances that are (geometrically) compatible with such constant Ricci curvature, while the set of distances generated by alternative methods is not.

The discrete Ricci flow non-linear diffusion process, we described in the paper, can be considered also as a "regularization", as the impact of highly curved edges (bottleneck) is propagated through the iteration of the discrete Ricci flow. This means that the distances assigned to edges are in fact resulting from the global graph structure rather than only a local property as other papers not using the Ricci flow are doing. 

The edge's distance metric derived from the DRF can therefore be used for various Machine Learning  (ML) tasks such as clustering or even robustness analysis. More generally, for ML tasks involving the inference of unknown distances using known distances.

Another aspect is relative to comparing the dfRge embedding with other classical Graph embedding techniques. Classical graph embedding as described in \cite{Cai2018} assumes that the data that have led to the graph are lying on a low-dimensional manifold and formalizes graph embedding as a structure-preserving dimensionality reduction problem.  The graph property to be preserved is interpreted as a pairwise node similarities matrix. Classical embedding methods try to project nodes with larger similarities close to each other. The differences between current graph embedding methods are mainly related to the approach used to calculate the pairwise node similarity matrix. In \cite{Cai2018} a taxonomy of different approaches is provided, where it is stated that the pairwise node similarity matrix can come from vertices directly through similarity metrics of the vertex attributes, {\em e.g}, Euclidean distance between attributes, or from spectral embedding approach. This last approach is  using the Laplacian matrix of the graph as the measure of similarity and generates a mapping between vertices and points in a $N$-dimensional Euclidean space ($N$ being the number of vertices in the graph). From there, dimension reduction techniques are applied to end up with points in a $K$-dimension with $K<N$. Muti-Dimensional Scaling (MDS) method \cite{NIPS1994_1587965f}, for example, uses the Euclidean distance between points representing vertices and reduces the dimensionality through an analysis of the spectrum of the point-wise distance matrix. Isomap method \cite{doi:10.1126/science.295.5552.7a}
considers the neighborhood of nodes by considering the graph structure and using the shortest path distances through the connected nodes to calculate the point-wise distance and thereafter applies to it the same dimensionality reduction approach as MDS. 
These two methods can be considered as defining the embedding using only the structural properties of the graph. Additional information, like node's attributes or features, can be integrated into the embedding process by simultaneously considering feature clustering, dimensionality reduction, and graph embedding into a global objective function that combines different goals. 

However, the dfRge approach is a technique that generates the pairwise vertex distance, and {\em per se} can only be compared with spectral embedding or techniques using vertices attributes. In addition, as explained before the DRC can combine both structural properties (neighborhood distance structure), and vertex attributes (edge weights) into a single framework with needing complex multi-objective optimization techniques.

This motivates why in the evaluation section we compare dfRge with spectral embedding alone, as almost all reported approaches in the literature take as input the pairwise vertex distance.

\section{Validation}\label{sec:validation}
We validate the convergence of the discrete Ricci flow to a constant discrete curvature and the concepts developed in the previous sections over a set of specific graphs and evaluate the resulting geometric properties. For each graph we have calculated the discrete Ricci Flow Embedding (dfRge), along with the spectral embedding proposed in \cite{ROBLESKELLY20071042}, that is based on spectral projection, {\em i.e.}, a Multi-Dimensional Scaling (MDS), of the Laplacian matrix of the graph. This last embedding is chosen as a reference, as it uses the relationship between the graph Laplacian and the Laplace–Beltrami operator over the manifold \cite{10.5555/2980539.2980616}. This relationship is also at the core of Graph Neural Network \cite{NEURIPS2021_0cbed40c}. The results of the comparison with spectral embeddings extend to a wide variety of embeddings based on the Laplace–Beltrami operator. As explained in the section \ref{sec:embedding}, other techniques used in the literature as isomap are not useful as a comparison as they begin with the distance matrix, which is the outcome of the dfRge.

As the edge distances are rescaled to an average of 1, its standard deviation evaluates the dispersion around this average. We also use the average ORC and its standard deviation as a  metric showing the convergence of the DRC. To check the robustness of the results, all experiments are executed 10 times with different random seeds. 

For validation purposes, we are using 5 different graphs described below. All these graphs have $N=10,000$ vertices.   

\textbf{Test graphs:} 
\begin{itemize}
    \item Random 3-regular graph: Graph in which each vertex is connected randomly to 3 other vertices.
    \item Erd{\"o}s-Reyni graph: Random graph in which two vertices are connected with probability $p$. We set $p$ a little above the connectivity threshold $p^*=\frac{log N}{N}$, {\em i.e.}, $p=9.4 \times 10^{-4}$.
    \item Plane random geometric graph: we generate randomly and uniformly a set of 10000 points, in the range $0\le x,y\le 10$ over a plane in the 3-dimensional space. We connect points that are at a distance at most equal $0.02$. This results in a connected graph.
    \item 5-cylinder random geometric graph: We generate randomly and uniformly a set of 10000 points over a cylindrical surface, then we connect each point to its $k=5$ closest neighbors.
    \item Stochastic Block Model graph (SBM): This type of graph is closely related to Erd{\"o}s-Reyni graphs. It combines several Erd{\"o}s-Reyni graphs with different connectivity parameters \cite{HOLLAND1983109}. These graphs are used as benchmarks for community detection algorithms. It consists of a partition of vertices into $k$ sub-sets called communities. Vertices in a community $i$ are connected together with probability $p_{ii}$ and to vertices in other communities with probability $p_{ij}<p_{ii}$, {\em i.e.}, vertices in one community are more connected together than into other communities. There is a rich literature on theoretical limits on $p_{ii}$ and $p_{ij}$ enabling easy retrieval of vertices in each community. In particular, the Kesten-Stigum (KS) threshold delimits the setting where the retrieval of blocks is easy \cite{Abbe2015}. To evaluate the discrete Ricci flow embedding, we generated an SBM with three equal-sized communities, with intra-community probability  $p_{ii}=0.01$ and inter-community probability $p_{i,j}=0.003$.
\end{itemize}
For all these five types of graphs, we calculated the dfRge after 20 iterations. The table \ref{tab:tabvalid} shows the results of the convergence of the DRC  in terms of average curvature, standard deviation of average curvature, average distance, and standard deviation of average distance. We can observe that the dfRge is keeping its promise in projecting the graph in a constant curvature, with very small $\sigma$ values, space where geometric analysis is doable. Nonetheless, spectral embedding fails to project these graphs into a fixed curvature space,  and therefore inferences made over geometric arguments with these embeddings' should be taken with caution. Interestingly, $k$-regular and Erd{\"o}s-Reyni graphs, where all vertices are statistically equivalent, generate dRfge that have uniform distance between neighbors, as can be expected from theory. However, Spectral embedding fails to detect this uniformity, {\em i.e.}, a clustering method applied to the spectral embedding distances will generate several well-separated clusters, while an Erd\"{o}s-Renyi should only generate a single cluster \cite{5494932}. The spectral embedding generates geometric artifacts that were not present in the initial data. A more detailed analysis shows that the spectral embedding distances are strongly correlated with the degree of the link extremities, {\em i.e}, the larger the degree the more distant the nodes, and clusters will be built around nodes with higher degrees. These values are very close to the KS threshold, and it is well-known for such values, spectral techniques cannot separate well the communities. 

However, for cylindrical random geometric graphs, vertices are not statistically equivalent, even if they are all the same degree. The dRfge acknowledges that with creating a diverse distance distribution with $\sigma=1.53$. Moreover, the average curvature attained is very small, leading to the intuition that the dRfge is detecting the flatness of the original cylinder, which is very clear in the case of the plane Random graph, the Ricci flow leads to flat embedding with curvature of -0.08(0.007).

For the SBM graph we see that DRC is converging, not as well as for the Erdos-Reyni. The distance exhibits large variations. A more detailed analysis for the results of the SBM graph shows that vertices in block 1 have an average distance of 1.05, in block 2 of 1.085, and in block 3 of 0.97. Moreover, the vertices in blocks 1 and 2 are distant on average by a distance of 1.6, block 1 and 3 vertices by an average distance of 2.3, and block 2 and 3 vertices by an average distance of 1.7. This means that the Ricci flow embedding retrieves the particular geometry of the 3-block SBM at values close to the KS threshold.  This is the principal motivation behind the proposition to use Discrete Ricci Flow for graph clustering \cite{osti_10153644,10.1093/imaiai/iaaa040}. 
However, differently from \cite{osti_10153644,10.1093/imaiai/iaaa040} here, we do not apply surgery (as it is not needed for discrete Ricci Flow). 

\begin{table}[h]
  \centering
  \small
  \begin{tabular}{|l|l|r|r| }
   \hline
   \multicolumn{1}{|l|}{\textbf{Graph type}}
   & \multicolumn{1}{|l|}{\textbf{Embedding}} & \multicolumn{1}{|l|}{\textbf{Average Curvature~(std)}}  & \multicolumn{1}{|l|}{\textbf{Average Distance~(std)}}\\
    \hline
    \multirow{2}{10em}{Random $3$-regular graph}& 
    Spectral & -0.698~(0.255) & 1.0~(0.170)\\ 
    \cline{2-4}
    &Ricci & -0.682~(0.007) & 1.0~(0.0075)\\ 
    \hline
    \multirow{2}{10em}{Erd\"{o}s-Renyi graph $\bar{k}=9.39$}& 
   Spectral & -1.697~(1.000) & 1.0~(6.116)  \\  \cline{2-4}
   &Ricci & -1.38~(0.017) & 1.0~(0.082)\\ 
    \hline
    \multirow{2}{12em}{plane Random Geometric Graph with $\epsilon=0.02$-balls}& 
    Spectral & -0.36~(0.875) & 1.0~(1.2) \\ 
    \cline{2-4}
    &Ricci & -0.08~(0.007) & 1.0~(1.28)\\ 
    \hline
    \multirow{2}{12em}{Cylinder Random Geometric Graph $k=5$}& 
    Spectral & -0.241~(0.918) & 1.0~(0.830) \\ 
    \cline{2-4}
    &Ricci & -0.172~(0.018) & 1.0~(1.530)\\ 
    \hline
    \multirow{2}{12em}{Stochastic Block Model
Graph}& 
    Spectral & -1.84(1.23) & 1.0(0.87)  \\ 
    \cline{2-4}
    &Ricci & -1.43 (0.042)  & 1.0 (0.43)\\ 
    \hline
  \end{tabular}
  \caption{Embedding results for different types of graphs all with $N=$10,000 nodes. The values shown are average (standard deviation)}
  \label{tab:tabvalid}
\end{table}
Overall the validation shows that the DRF provides more consistent distance embeddings than the spectral method. This is not unexpected, taking into account the theoretical arguments given in the paper. On another hand the validation shows that spectral embedding method that is at the core of almost all graph embedding in literature project the graph into a space  where the curvature are not constant, resulting into biases when geometric methods are used.

\section{Accelerating the Calculation of the Ollivier-Ricci Curvature}\label{sec:algorithms}
Section~\ref{sec:theory} discussed how deriving the DRC for an edge involves solving a linear optimization problem of optimal transport, that takes as input a distance matrix of shortest paths between neighbors of the edge extremities and a source and destination distributions over these neighbors. While solving this linear program is relatively of low complexity and can be derived in practice in milliseconds, the principal complexity challenge is the derivation of the distance matrix that can involve, for a single iteration of the discrete Ricci flow,  hundredth of millions of shortest path calculation for practical graphs. A naive implementation could require tens of hours of calculation to derive the DRC for all edges of a  graph of reasonable size. 

This large complexity explains why previous research works using Ricci curvature have evaluated it on relatively small graphs (in order to hundreds of nodes), and why this approach has not gained thrust in the machine learning community. Using the discrete Ricci flow embedding in practice entails finding ways to reduce this complexity. A contribution of this paper is to reduce the height of this tractability wall through some algorithmic shortcuts.

\subsection{Single Source-Multiple Destination Dijkstra Algorithm}
In regard to its importance, a relatively large literature has been devoted to optimizing the calculation of shortest paths. The Single Source to Single Destination (SSSD) problem with positive edge lengths can be solved using the famous Dijkstra algorithm with a linear complexity $\Omega(N)$ \cite{EWD:EWD316}. It consists of maintaining an ordered heap containing neighbors of nodes visited through a Breadth First Search (BFS) beginning at the source and removing from the heap, at each iteration, settling the shortest path distance from the source of the vertex at the top of the heap and removing it. The process continues till the destination appears at the top of the heap. While the shortest path distance complexity is $\Omega(N)$, in practice, the real complexity might be lower, depending on the density of the graph and its node degree distribution. Some variants of the Dijkstra algorithms, like  $A^*$ algorithm \cite{Botea2004NearOH}, can be used when additional information is available. A closely related problem is the All-Source to All-Destination (ASAD) shortest path derivation that can be solved by the Floyd-Warshall algorithm. This algorithm finds, with a complexity $\mathcal{O}(N^3)$, all distances in a graph by incrementally improving an estimate on the shortest path between each two vertices until the estimates are optimal. The Floyd-Marshall algorithm needs to maintain a $N\times N$ dense distance matrix with a large memory footprint for large graphs. 

Deriving the distance matrix needed for calculating the DRC entails solving a particular instance of the Multi-Source, Multi-Destination (MSMD) shortest paths where all source-destination pairs share a part of their neighborhood. We can leverage this property to break the complexity. Let's look at a simple case with a single source and multiple destinations. One can run multiple instances of SSSD all with the same source but different destinations. But this will go, in each run, through the same vertices and waste time. Another approach, called Single Source-Multiple Destination (SSMD) shortest path, consists in continuing to process the ordered heap till all destinations in a set are found. This approach is more efficient, as it integrates all pairs that are likely to share the same segments into a single run of the shortest path algorithm, avoiding processing the same vertices. 
\begin{figure}[ht]
  \centering
        \includegraphics[width=0.6\textwidth]{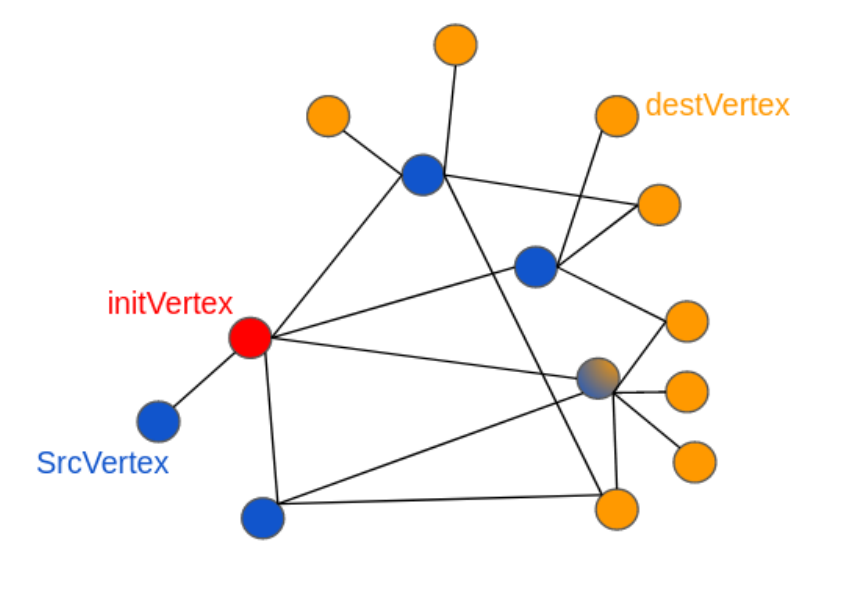}
        \caption{Arrangement of vertices to derive efficiently the distance matrix in curvature calculation} \label{fig:arrangement}
\end{figure}
\subsection{Vertex arrangement in tasks}\label{sec:arrange}
We can further benefit from the specific structure of the distance matrices needed to calculate the DRC, by arranging the source and destination sets in a particular way. Let's consider a vertex {\tt initVertex} in the graph, and let's consider the set of neighbors of {\tt initVertex} as the sources set $\mathcal{S}=\{s_1,s_2,\ldots,s_k\}$. The neighbors of the vertices in the sources set are considered as the destination set $\mathcal{D}=\{d_1,d_2,\ldots,d_l\}$. All distance matrices that are needed to derive the optimal transport over all edges adjacent to {\tt initVertex} can be retrieved as a sub-matrix of the distance matrix $\mathcal{S} \times \mathcal{D}$. We define the calculation task $\mathcal{T}$({\tt initVertex}) relative to vertex {\tt initVertex} as, calculating the distance matrix $\mathcal{S}\times \mathcal{D}$ defined above. At the end of the task $\mathcal{T}$({\tt initVertex}) the optimal transport optimization problem can be solved for each edge adjacent to {\tt initVertex} and its DRC can be obtained. We show the arrangement in Fig. \ref{fig:arrangement}.

Therefore, to derive the DRC for all edges in a graph, we can go through the list of all vertices $v_i$ and calculate the task $\mathcal{T}(v_i)$. Whenever an edge adjacent to $v_i$ has already been processed, we are not adding it to the task $\mathcal{T}(v_i)$  and remove the related destinations from the set $\mathcal{D}$. Tasks are created until all edges in the graph have been processed. The size of each task to run, {\em i.e.}, the size of the distance matrix, $|\mathcal{S}\times |\mathcal{D}|$, depends directly on the degree distribution of the graph, while the number of tasks depends on the density of the graph.  Moreover, the distance matrices involve vertices that are at most 3 hops apart. Therefore, the performance depends on the size of the 3-hops neighborhood through controlling the number of elements inserted into the ordered heap which is at the core of the Dijkstra algorithm. Thus, the properties of the graph impact strongly the execution time. It is noteworthy that each one of the above-defined tasks is independent and can be executed in parallel. This is precisely what we do to speed up the process. Each task is entered into a job queue and processed whenever a thread from a thread pool is available.  This relatively simple implementation is very efficient. We implemented a C++ version of the above-described algorithms.

\subsection{Performance Evaluation}
In order to evaluate the performance of the presented algorithms, we use two synthetic graphs, with similar number of vertices: an Erd{\"o}s-Renyi (ER) Graph with 10,000 vertices and 93,919 edges, and a connected Watts-Strogatz (WS) small world graph with 10,000 vertices and 60,000 edges. We are also using a real AS-level BGP graph representing the whole Internet with 77,804 vertices and 280,231 edges. These three graphs are used as they have different node degree distributions and densities. 

We implemented two parallel methods for deriving the needed distance matrices: SSSD, and SSMD. In addition, we also used two approaches for defining a calculation task. In the first approach, the arrangement described in Sec. \ref{sec:arrange} is not used, and each task is calculating the curvature of a single edge by first deriving the relevant neighbors' distance matrix and by using it to derive the DRC of the edge. The second approach uses the arrangement proposed in Sec. \ref{sec:arrange}. We show in table \ref{tab:MSMD} the overall running time in seconds for one iteration of calculating the DRC for all edges of the graph. We show the results for both SSSD and SSMD for the second task definition approach and the results for SSMD alone for the first approach. We are not showing the SSSD results with the first approach, as they are clearly non-competitive,  {\em e.g.}, the SSSD with the first approach over the BGP needs several days of calculation. 

All experiments are made using a computer equipped with an AMD EPYC 7401P 24-core Processor with 48 threads, 64 MB of L3 cache, and running at a 3GHz clock rate. The installed RAM is 256GB. The in-use memory footprint never exceeded 200 MB in our experiments, thus we do not discuss it further.
\begin{table}[ht]
\small
\centering
\begin{tabular}{|l|r|r|r|r|}
\hline
\multicolumn{1}{|p{1.5cm}|}{\textbf{Graph type}} & \multicolumn{1}{p{2.5cm}|}{\textbf{Shortest paths~(millions)}} & \multicolumn{1}{p{2.9cm}|}{\textbf{SSMD without vertex arrangement~(s)}} & \multicolumn{1}{p{2.6cm}|}{\textbf{SSSD with vertex arrangement~(s)}} & \multicolumn{1}{p{2.6cm}|}{\textbf{SSMD with vertex arrangement~(s)}} \\
\hline
ER graph& 39.4& 811.43& 3404.61 & 41.69  
\\ \hline
WS graph& 9.2& 63.79&95.45 & 5.39
\\ \hline
BGP graph & 854&5653.25&456808.12& 200.47
\\ \hline
\end{tabular}
\caption{Calculation time for distance matrices needed to derive the curvature of all edges of 3 types of graph using three alternative methods.}
\label{tab:MSMD}
 \vspace{-8mm}
\end{table}  
The results are summarized in Table~\ref{tab:MSMD}. The first conclusion to be drawn from the table is that the SSMD method is much more efficient than the SSSD method, with speedup gains ranging from 19 (for the WS graph) to 2284 (for the BGP graph). Moreover, the arrangement scheme described in Sec. \ref{sec:arrange} brings by itself a gain ranging from 12.6 (for the WS graph) to 28.16 (for the BGP graph). This is the combination of these two algorithmic tweaks that make the dRfge tractable even for large graphs as the BGP one.

It is noteworthy that the Ricci flow needs to execute several iterations of the DRC calculation over the graph. In the experiment done in this paper, we needed less than 10 iterations to get to a standard deviation below 0.02. This means that the expected running time for dRfge is in the order of 10 times the values reported in Table~\ref{tab:MSMD}. Last but not least, we evaluated the impact of the complexity of solving the linear program of optimal transport, in all the experiments that we did, the time spent on solving this program never exceeded 0.01\% of the total derivation time. So we did not bother to use an approximation to the optimal transport like in \cite{Cuturi} to speed up the calculations.

\section{Use Case: BGP Graph Analysis}\label{sec:applications}
Previous sections introduced the discrete Ricci embedding for graphs and motivated the use of this embedding using arguments relative to the correctness and possibility of the geometric interpretation. We also validated the approach on synthetic graphs. In this section, we present an application of our method on an AS-level BGP graph representing the whole Internet \cite{10.5555/839292.842983} with 77,804 vertices and 280,231 edges captured in July 2022. In \cite{7218668}, the authors show that internet topology through the application of $0$-curvature to an simplified Internet graph. Here we look at a global large scale Internet graph after DRC convergence. The discrete Ricci flow over this large-scale graph converges after 17 iterations to a constant curvature space with curvature equal to -0.32(0.01) and edge distances in the range $[0.0007,24.3]$.
\begin{figure}[ht]
  \centering
  \includegraphics[scale=0.45]{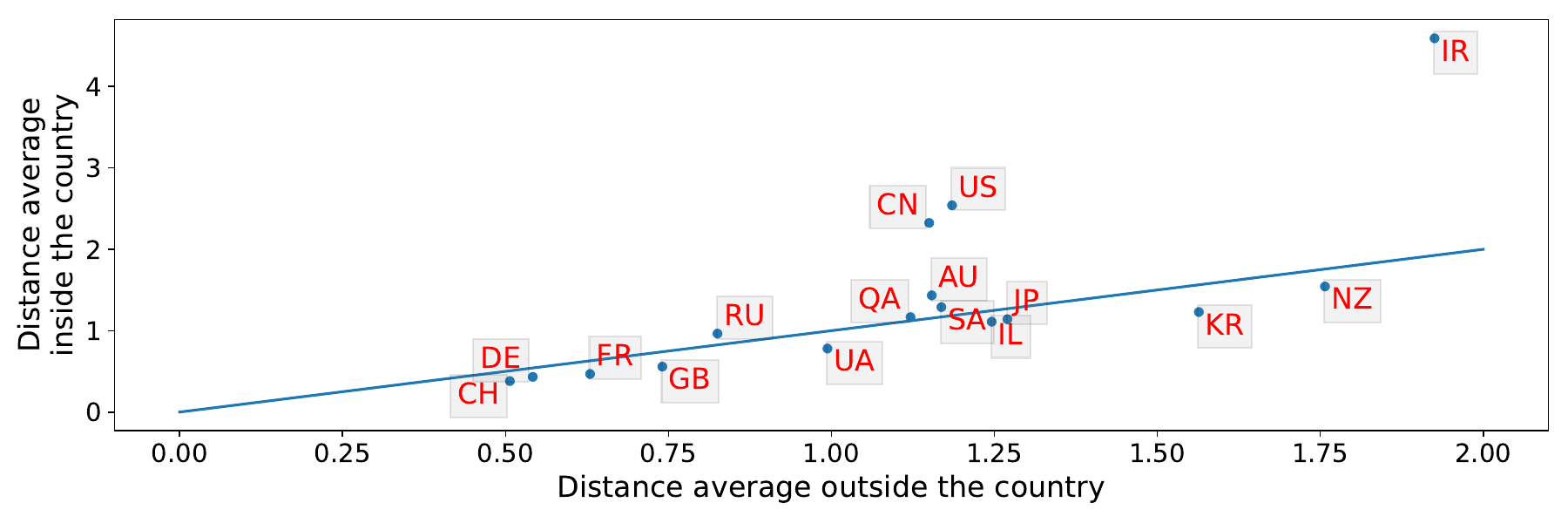}
  \caption{Ratio of the average edge distance inside a country over edge distance to neighboring countries} \label{fig:ratio}
\end{figure}
Over this graph, we aim to analyze the global structure of connectivity between countries. In particular, we evaluate how the external Internet connectivity of a country compares with its internal connectivity, comparing the average distance in a country to the average distance to neighboring countries provides insights into bottlenecks residing within or outside a given country. For this purpose, we need a geometric representation of the connectivity graph that evaluates the relative strength of the bottleneck created by each link. With such a metric, it becomes possible to compare intra-country bottleneck strength vs. inter-countries and gather some insights on the strategy of different countries. The distance metrics obtained from the DRF discrete Ricci flow are a good candidate metric for this objective. These metrics not only represent the local bottlenecks, but they evaluate the impact of downstream an upstream bottlenecks thank to the the non-linear regularization done by the Ricci flow.

It is noteworthy that because we have embedded the BGP graph into a constant curvature  space, it becomes meaningful and correct to compare distances between edges.

We show in Fig. \ref{fig:ratio} the values of the average edge distance inside a country vs. the average edge distance to the neighboring countries along with the equality line for a group of representative countries. A value below the equality line  means that the bottlenecks outside the country are stronger on average than the ones in the country. The opposite applies to values above the line. We observe that points relative to European countries, along with Japan and Korea, are below the line, while countries implementing a censorship control over their outside traffic, like Iran, Russia, Saudi Arabia, Qatar, or China are above the line. This can be explained by the fact that these countries create artificial connectivity bottlenecks inside their country to redirect traffic to control points. The large value in US can be explained by the fact that multinational US based corporate with large-scale international internal networks like Google, Microsoft, Amazon, or Akamai are considered to be inside the US. The case of Australia is intriguing as it cannot be explained by the island's nature (as Japan, Great Britain, and New Zealand have values below one).


\section{Conclusion}
In this paper, we argue that to get a correct geometrical interpretation of graphs, we need to adopt embeddings that result into regular constant curvature space. We present the discrete Ricci flow Embedding as a solution that guarantees this property. We prove that the discrete Ricci flow converges to a constant curvature for all graph edges and a stable distance metrics, providing an equivalent of the Perelman-Poincare theorem for graphs. We show that differently from 3-manifold the convergence will be to regular constant curvature space and that surgeries are not needed.

Further, we present a new set of algorithmic shortcuts to calculate th e DRC over large-scale graphs and show that they can reduce computing time by two to three orders of magnitude over conventional algorithms. Finally, we illustrate the use of the approach on synthetic graphs, as well as validate it on a realistic case study. 

This paper opens major perspectives in several directions: Machine Learning indeed, resilience of graphs (as illustrated in the use case), and more generally analysis of graphs. The next steps will be to apply the embedding in clustering, graph inference and topological analysis of graphs. There is already a rich literature on the application of Ricci flow to these problems. Notably, this paper provide strong theoretical basis for these approaches and correct the curvature formulation they use.


\bibliographystyle{plain}
\bibliography{sample-base}
\newpage

\appendix







\section{Theoretical considerations}
We provide here more details on the theoretical roots of the method proposed in the paper. It is noteworthy to emphasize that our aim here is to give to the readership the intuition of the mathematical concepts that are behind the practical tools we will introduce in the paper. We have therefore balanced the mathematical precision with the ease of reading. A complete and precise presentation of differential and Riemannian geometry materials can be found in \cite{lafontaine} and in the given references there.
\subsection{Curvature and Ricci flow}
\label{sec:curva}
Different notions of curvature have been defined in differential geometry,  that all represent and quantify the deviation of the space from flatness. A surface might be approximated by a tangent hypersphere above the surface resulting into a positive curvature;  a tangent hypersphere below the surface resulting into negative curvature; a tangent plane (a hypersphere with radius $\infty$) resulting into a null curvature. Curvature has a central position in the theory of manifold, as it relates the global structure of the space and its topology to local and microscopic properties of the manifold. The famous Gauss-Bonnet theorem \cite{lafontaine}:
\begin{equation}
    \int_{M}\Gamma dA+\int_{\partial M} \Gamma_g ds=2\pi\chi(M)
\end{equation}
relates the integral of the Gaussian curvature $\Gamma$, a local and microscopic property, over a compact 2-dimensional Riemannian manifold $M$ and its border $\partial M$,  to its Euler characteristic or genus (the number of holes in the surface), $\chi(M)$. This theorem provides an illustration of the relationship between microscopic properties of the manifold, the curvature, and global and fundamental topological property, the genus.

The Ricci curvature, $\kappa$, extend the concept of curvature by looking at the behavior of geodesics, curves with the shortest metric distance over the manifold, between two points. More precisely, let $x$, and $y$ be two  close points over a manifold, and let's look at geodesics coming out of these two points. When curvature is positive, resp. negative,  geodesic get farther apart, resp. closer. Ricci's curvature assess the speed of divergence or convergence of the geodesics coming out of a point. In more precise terms, let's consider a ball of radius $\epsilon$ around $x$, and transport all points of this ball along the relevant geodesics to a ball around a close-by point $y$. Then, on average, points in this ball will travel less than the distance between $x$ and $y$ when the curvature is positive, and travel more, when the curvature is negative. Ricci and Gaussian curvature are closely related, {\em e.g.}, for 2D spaces, $\Gamma=2\kappa$.

The Ricci flow over a Riemannian manifold $M$, introduced in \cite{Hamilton88}, is a non-linear Partial Differential Equation (PDE) that acts on a metric $g(t)$ (that can be any function with metric properties) defined over the manifold, proportionally to the Ricci curvature $\kappa^c(g(t))$ that depends itself on the metric $g(t)$:
\begin{equation}
\label{eq:ricciflow}
\frac{\partial g(t)}{\partial t}=-2\kappa^c(g(t))   
\end{equation}
The Ricci flow reduces the distance metric $g(t)$ in the regions with positive curvature, and increases it in the regions with negative curvature. The changes in the distance metric results in flattening the manifold by changing the curvature $\kappa^c(g(t))$. This means that the Ricci flow acts similarly to a steam iron that is pressing over the complex manifold wrinkles and transforming it into a flat sheet. The Ricci flow is a major tool in proving the Poincar{\'e}-Perelman theorem\cite{Perelman2003}. We refer the reader for a more complete explanation of the mathematical details to \cite{lafontaine,Hamilton88}.

\subsection{Geometric analysis}
Geometry has been invented 5000 years ago for solving concrete spatial problems, like, calculating unknown distances, angles, areas, or volumes from known values. For example, calculating the third side of a triangle using two sides has been the main motivations behind famous geometry theorems like the Thales and Pythagorean ones. The most general theorem on this topic is the Al-Kashi-Toosi theorem \cite{1891traite}, also known as the cosine theorem,  that relates the length of a side of a triangle to the two other lengths and the opposite angle 
\begin{equation}
    c^2=a^2+b^2-2ab\cos \gamma
\end{equation}
This theorem is the main element behind the definition of Euclidean distance and the whole "geometric intuition" that is central to most machine learning algorithms. For example, $k$-means algorithms consists of calculating the distance of points to several centro{\"i}ds representing clusters and choosing the closest one. The centro{\"i}d or center of gravity of $k$ points is the point that minimizes the sum of squared distances between itself and each point in the set. These distances are evaluated using a specific version of cosines theorem: the famous Pythagorean theorem. 

The cosine theorem can be extended to constant curvature and regular manifolds. For a manifold with constant negative Gaussian curvature, $\Gamma=-\frac{1}{k^2}$, we have : 
\begin{equation}
    \cosh{\frac{c}{k}}=\cosh{\frac{a}{k}}\cosh{\frac{b}{k}}-\sinh{\frac{a}{k}}\sinh{\frac{b}{k}}\cos\gamma,
\end{equation}
that reduces to classical cosines law in Euclidean space when $\Gamma \rightarrow 0$
($k\rightarrow \infty$). The equivalent of the Pythagorean theorem in a hyperbolic space becomes $\cosh{\frac{c}{k}}=\cosh{\frac{a}{k}}\cosh{\frac{b}{k}}$.

\begin{figure}[ht]
  \centering
  \includegraphics[scale=0.2]{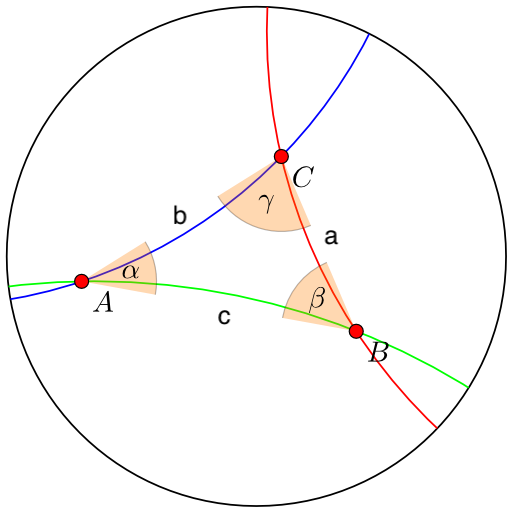}
  \caption{Cosine law for constant curvature hyperbolic space} \label{fig:ratio}
\end{figure}
Through these extensions, geometric analysis is possible over constant curvature hyperbolic space, and it is possible to evaluate unknown distance through triangulation. In particular, centro{\"i}d calculation can be done using the method described in \cite{2021}. Similar cosine law exist for spherical constant curvature space with positive curvature. 

However, for spaces with non-constant curvature (general manifold), there is no equivalent of law of cosines, {\em i.e.}, for a triangle laying over a general manifold, knowing two sides and the angle between them does not make possible to evaluate the length of the third side. The best that can be achieved is given through the Toponogov's theorem. This theorem states that for a Riemannian manifold with Gaussian curvature $K$ satisfying $K\geq \delta$,  and for a triangle ABC with known side $a$, $b$ and angle $\gamma$,  the length of the third side $c$, will be less than the length $c'$ of a triangle with two sides length equal to  $a$, $b$ and angle $\gamma$ in a constant curvature space with curvature $\delta$. Because of the exponential growth of the $cosh()$ function, this bound is not tight and results in serious overestimation of distances. In  particular, assuming Euclidean distance for points that are projected over a negative curvature manifold will be flawed. 

Solving the above issue over a non-constant curvature manifold need to have the detailed microscopic characteristics of the manifold to calculate precisely the unknown distances through differential geometry formulas. This is frequently impossible to be done as these precise microscopic information are not available. Only over constant curvature space, we can related distance and angles to each other. The Ricci Flow is precisely mapping a non-constant curvature manifold where geometric intuitions are not valid to an constant curvature one, and it is therefore as the core of the approach developed in the paper.

The previous observation about overestimation of distance over general manifold, has major implications in machine learning as a wide scope of techniques on graphs or on point data, like spectral embeddings, $t$-SNE, or Graph Neural Networks, are using embeddings that project the initial data into general manifolds where as said above geometry is not valid.
\section{Proofs}

\subsection{Proof of Lemme \ref{lemma:2}}
\label{appendix:proof}

\begin{lem}
Let $G=(V, E,w)$ be a weighted graph, and $e=(x,y)\in E$ an edge in the graph. For any two distance vectors ${\bf d}$ and ${\bf d}'$, such that ${\bf d}$ and ${\bf d}'$ differ only on edge $(x,y)$ and $d(x,y) \leq d'(x,y)$, and for any function $g\in Lip(1)$, such that $g(y)-g(x) = d'(x,y)$, there exists a function $g' \in Lip(1)$, such that $g'(y)-g'(x) = d(x,y)$, and $d(x,y) - d'(x,y)\leq g'(z)-g(z) \leq 0$ for all $z$ in $V$ .
\end{lem}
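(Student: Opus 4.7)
Set $\delta := d'(x,y) - d(x,y) \geq 0$ and let $\mathrm{dist}_d$ denote the shortest-path metric on $V$ induced by the edge weights $d$. The construction I propose is
\[
g'(z) \;:=\; \min\!\bigl(\,g(z),\; g(x)+\mathrm{dist}_d(x,z)\,\bigr).
\]
The intuition is that shrinking the weight of the single edge $(x,y)$ can only invalidate Lipschitz bounds on pairs whose shortest $d$-geodesic crosses that edge, and capping $g$ against the distance-from-$x$ cone removes exactly that excess.

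The first items to check are the boundary values and the two-sided bound $g-\delta\leq g'\leq g$. I would first observe that $\mathrm{dist}_d(x,y)=d(x,y)$: since $g(y)-g(x)=d'(x,y)$ combined with $g\in Lip(1)$ forces $\mathrm{dist}_{d'}(x,y)=d'(x,y)$, every alternative $x$–$y$ path (whose length is unchanged when passing from $d$ to $d'$) has length $\geq d'(x,y)\geq d(x,y)$, so the direct edge remains a shortest path under $d$. It follows that $g'(x)=g(x)$ and, because $g(y)=g(x)+d(x,y)+\delta > g(x)+d(x,y)$, $g'(y)=g(x)+d(x,y)$, giving $g'(y)-g'(x)=d(x,y)$. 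The inequality $g'\leq g$ is immediate from the $\min$; for $g'\geq g-\delta$, I would use that any $d$-path has $d'$-length at most its $d$-length plus $\delta$ (since $d$ and $d'$ differ by $\delta$ on a single edge), so $\mathrm{dist}_{d'}(x,z)\leq \mathrm{dist}_d(x,z)+\delta$, and combining with $g(z)-g(x)\leq \mathrm{dist}_{d'}(x,z)$ yields $g(x)+\mathrm{dist}_d(x,z)\geq g(z)-\delta$.

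The heart of the proof is verifying $g'\in Lip(1)$ with respect to $d$, i.e.\ $|g'(u)-g'(v)|\leq d(u,v)$ on every edge. The edge $(x,y)$ itself is already settled by the values computed above; on any other edge $(u,v)$ I would split into four sub-cases according to which of the two arguments realises each of $g'(u),g'(v)$. The two ``pure'' sub-cases reduce respectively to the triangle inequality for $\mathrm{dist}_d$ (both endpoints capped) and to $g\in Lip(1)$ together with the identity $d'(u,v)=d(u,v)$ (both endpoints uncapped). A representative mixed sub-case---say $g'(u)=g(x)+\mathrm{dist}_d(x,u)\leq g(u)$ and $g'(v)=g(v)$---is settled by the chain
\[
g'(u)-g'(v)\;=\;\bigl(g(u)-g(v)\bigr)-\bigl(g(u)-g(x)-\mathrm{dist}_d(x,u)\bigr)\;\leq\;g(u)-g(v)\;\leq\;d'(u,v)\;=\;d(u,v),
\]
in which non-negativity of the bracketed term is exactly the sub-case hypothesis. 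The main obstacle, and the point at which the assumption that $d$ and $d'$ differ only on $(x,y)$ really bites, is precisely this replacement $d'(u,v)=d(u,v)$: propagating the bound across more than one edge would cost an unavoidable $\delta$-slack, so one must argue edge-by-edge and exploit that the metric modification is localised to the single edge $(x,y)$.
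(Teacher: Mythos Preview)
Your argument is correct and is, in fact, cleaner than the route the paper takes. The paper proceeds locally: it first forces the value at $y$ down to $g(x)+d(x,y)$, obtaining an intermediate function $\mathcal G$ that may violate the Lipschitz bound only along the edges incident to $y$, and then repairs these violations by subtracting suitable offsets $\delta_i$ from $\mathcal G$ at the offending neighbours $z_i$ of $y$, arguing case by case that the repair does not propagate new violations outward. Your construction $g'(z)=\min\bigl(g(z),\,g(x)+\mathrm{dist}_d(x,z)\bigr)$ achieves the same thing in one global stroke: it is an infimal convolution (a McShane--Whitney type cap) of $g$ with the $d$-distance cone rooted at $x$, so the Lipschitz property comes essentially for free from the fact that a pointwise minimum of $1$-Lipschitz functions is $1$-Lipschitz, the only subtlety being the edge $(x,y)$ itself which you handle explicitly. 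The two-sided bound $g-\delta\le g'\le g$ then drops out of the single-edge comparison $\mathrm{dist}_{d'}\le \mathrm{dist}_d+\delta$. What the paper's iterative approach buys is that one never needs to invoke the global shortest-path metric, only one-step neighbour relations; what your approach buys is a short, uniform verification with no cascading case analysis. One minor remark: in your mixed sub-case you display only the inequality $g'(u)-g'(v)\le d(u,v)$; the reverse bound $g'(v)-g'(u)\le d(u,v)$ follows from $g'(v)=g(v)\le g(x)+\mathrm{dist}_d(x,v)$ together with the triangle inequality $\mathrm{dist}_d(x,v)-\mathrm{dist}_d(x,u)\le d(u,v)$, and is worth stating explicitly.
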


\begin{proof}
    The lemma states that for a function $g \in Lip(1)$, that validates the distance constraint on edge $(x,y)$, $g(y)-g(x)=d'(x,y)$, there exists an approximation $g'$ in $Lip(1)$, that validate the distance constraint, $g'(y)-g'(x)=d(x,y)$.
    
    We know that $g$ is in $Lip(1)$, so $|g(s) -g(t)| \leq d'(s,t)$ for all $(s,t)$ in $E$. 
    Now let's define the function $\mathcal{G}$ as:
    $$
        \mathcal{G}(z) = \left\{
        \begin{array}{ll}
            g(x) + d(x,y) & \mbox{if}\; z = y \\
            g(z) & \mbox{otherwise}
        \end{array}
    \right.
    $$
    We have $\mathcal{G}(y) - \mathcal{G}(x)=d(x,y) $, so $\mathcal{G}$ validate the distance constraint on edge $(x,y)$. Moreover, $g(y)-\mathcal{G}(y)  =g(y)- g(x)+d(x,y) = d(x,y)-d'(x,y)$. And, $g(z)-\mathcal{G}(z)=0$  for $z\neq y$. 
    
    The edges $(z,y)$ are the only one to get directly impacted by the change of value of $\mathcal{G}$ compared to $g$.  So  let's consider first $\mathcal{G}(z)-\mathcal{G}(y)$ for $z \in N(y)$. If the function $\mathcal{G}$ is not in $Lip(1)$ there exists a set of $n$ neighbors of vertex $y$  $\{z_1, \ldots, z_n\}$, indexed such that $\mathcal{G}(z_1)\le \mathcal{G}(z_2)\le \ldots\le \mathcal{G}(z_n)<\mathcal{G}(y)$, and  $|\mathcal{G}(z_i)-\mathcal{G}(y)|>d'(z_i,y)$. Let $Z=\{z_1,\ldots,z_n,y\}$. It is notable that function $\mathcal{G}$ is only different from $g$ on vertex $y$, therefore it inherits from $g$ the $Lip(1)$ property for all other edges beyond the ones with extremities in $Z$. 
    
    Moreover, we have $|\mathcal{G}(z_i)- \mathcal{G}(x)|$ not equal to $d(x,z_i)$, this results from $\mathcal{G}(z_i)-\mathcal{G}(x) > d(y,z_i)+ +\mathcal{G}(y)-\mathcal{G}(x) = d(y,z_i)+d(x,y) >-d(x,z_i)$ and $\mathcal{G}(z_i)-\mathcal{G}(x) <-d(y,z_i) +d(x,y) <-d(y,z_i)+d(x,y) < d(x,z_i)$. Thus subtracting an appropriate $\delta_i>0$ from  $\mathcal{G}(z_i)$ will not affect $(x,z_i)$.
    
    Now we create function $\mathcal{G}'(z_i)=\mathcal{G}(z_i)-\delta_i$ so that $\mathcal{G}(z_j)-\delta_j \le \mathcal{G}(y) + d(y,z_i)$. $\delta_i$ is lower bounded by $\mathcal{G}(z_j)- g(y)- d(y,z_i) > 0$, and upper bounded by $d'(x,y)-d(x,y)$.
    
    To ensure Lip(1) property among edges $(z_i,z_j)$, we let $\forall i,j,\; 1\le j\le i\le  n-1,\; < \delta_j-\delta_i < \mathcal{G}(z_j)-\mathcal{G}(z_i)+d(z_i,z_j)$. We can ensure that by decreasing the values of $\mathcal{G}(z_i)$ as much as possible in the reverse order, from $z_n$ to $z_1$. Now all pairs among $Z$ are $Lip(1)$ under $\mathcal{G}'$. Moreover with, $\delta_i \ge \mathcal{G}(z_i)-\mathcal{G}(x)- d(x,z_i)$ for each $i$, the new function $\mathcal{G}'$ is still  in $Lip(1)$ for pair $x$ and $z_i$.

For any other vertices $v$ out of set $Z \cup \{x\}$, if $\mathcal{G}(z_i)- \mathcal{G}(v) =-d(v,z_i)$ , we need to subtract $\delta_i$ from $g(v)$ so that $\mathcal{G}'$ is $Lip(1)$ for edge $(v,z_i)$. Nonetheless, such $v$ does not exist, as it would lead to $\mathcal{G}(v)- \mathcal{G}(y) = \mathcal{G}(z_i) + d(v,z_i) > d(y,z_i) + d(v,z_i) \le d(y,z)$ that is a contradiction. If $\mathcal{G}(v)- \mathcal{G}(z_i) = d(v,z_i) , \mathcal{G}(v)- \mathcal{G}'(z_i) = \mathcal{G}(v)- \mathcal{G}(z_i)- \delta_i < d(v,z_i)$ thus there is no need to reduce value from $\mathcal{G}(v)$. We have similar conclusion for cases where $|\mathcal{G}(v)-\mathcal{F}(z_i)| <d(v,z_i)$ . Finally, there exist $\delta_i>0$ so that $\mathcal{G}'(z_i)=\mathcal{G}(z_i)-\delta_i$ is $Lip(1)$ over all edges of the graph $G$, and $\mathcal{G}'(y)- \mathcal{G}'(x) = d(x,y)$. Moreover, $\mathcal{G}'(z)-g(z)=0$ for vertices $z \notin Z$ and $0\geq \mathcal{G}'(z_i)-g(z_i)=-\delta_i \geq d(x,y)-d'x(y)$ for all $z_i \in Z$. The proof is complete.
\end{proof}

\section{Additional Use Case: E-roads map analysis}
\label{Appendix:useCase2}
An additional use case we considered for illustrating the interest of the discrete Ricci embedding is the analysis of the network of all E-roads, major European roads numbered from E1 up and crossing national borders up to Central Asian countries like Kyrgyzstan in mainland European. The graph consists of $1,041$ vertices representing the main landmarks of the E-roads and $1,307$ edges. After 180 iterations, the discrete Ricci flow converges to a constant curvature space with curvature -0.501 (0.001) and edge distances in [0.0001, 7.89]. 


\begin{figure}
	\centering
	\begin{subfigure}[t]{0.95\linewidth}
		\includegraphics[width=\linewidth]{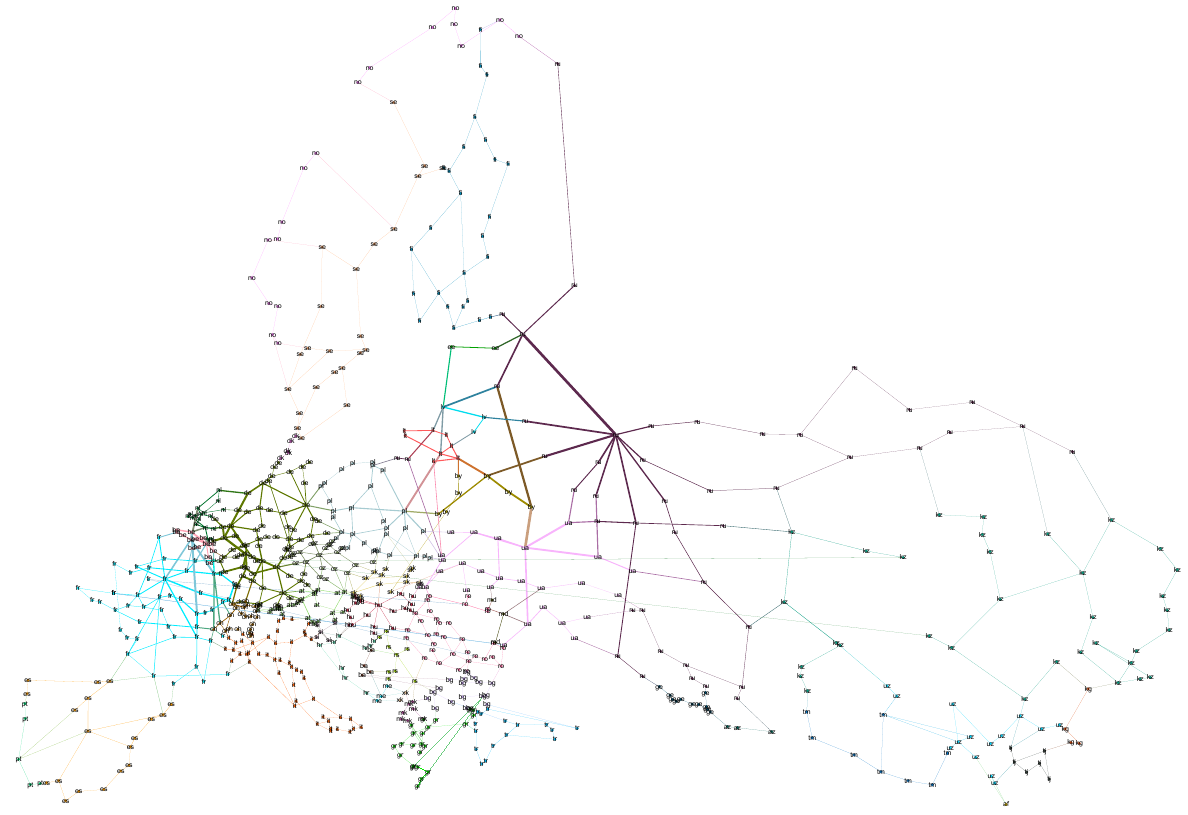}
		\caption{Complete European map}
		\label{fig:map}
	\end{subfigure}
	\hfill
	\begin{subfigure}[t]{0.95\linewidth}
		\includegraphics[width=\linewidth]{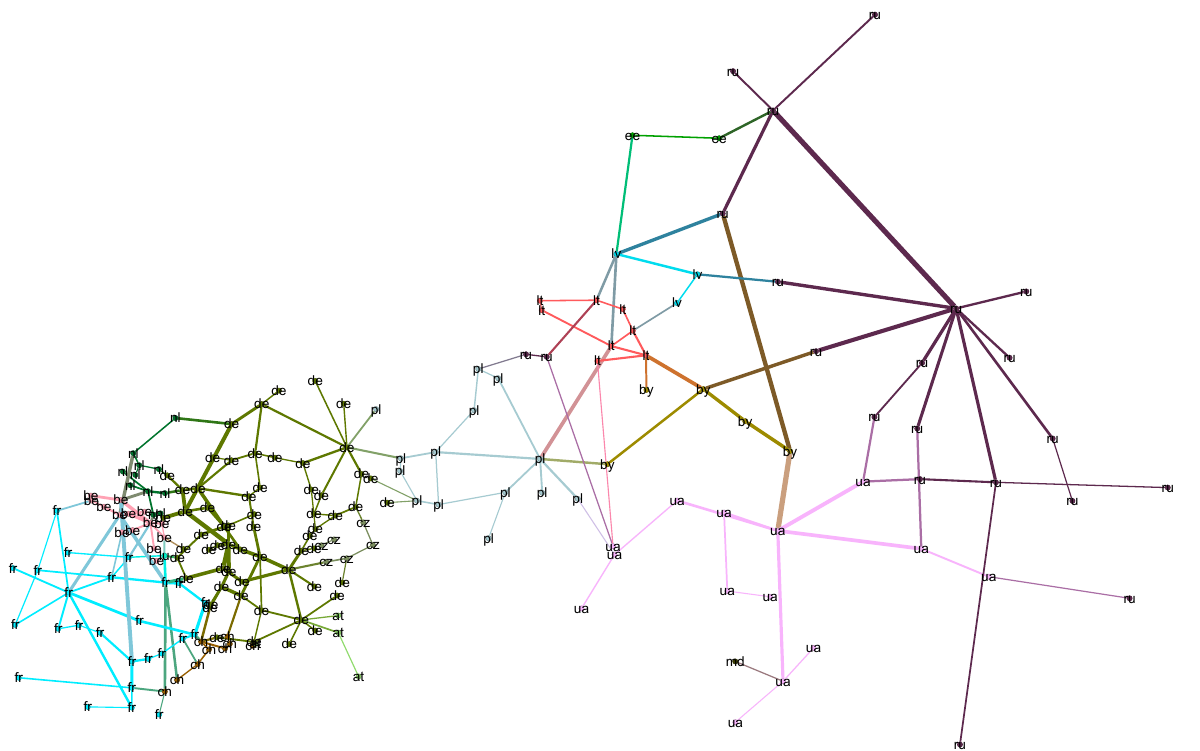}
		\caption{Map of European road filtered by edge distance larger than 1 after the Ricci flow}
		\label{fig:zoom2}
	\end{subfigure}
	\caption{The E-roads map after Ricci flow application (the intensity of the edge color is proportional to the Ricci flow distance equilibrium distance.}
	\label{fig:evaluation}
\end{figure}
We show in Figure \ref{fig:map} a map of the E-roads colored using distances obtained after the application of the Ricci Flow. The longest distance edges are shown as thicker  lines, while the shortest are shown as thin lines (the distances between points on the map are real physical distances). 

A classical problem in complex network analysis is to extract the network's backbone, {\em i.e.}, the sub-network that captures structurally important vertices and links. The backbone defines the mesoscale structure of the graph, by emphasizing long-range connections between communities.  In \cite{10.1093/comnet/cnac014}, it is suggested that the Ricci flows can be used to extract networks’ backbone. The map illustrates very well this ability of Ricci flows. We show in Fig. \ref{fig:zoom2}, the filtered graph where the 199 edges out of 1041 that have a distance larger than 1.2 are represented, the average length being 1.

This is noteworthy that as we are in a constant curvature space, the shortest path distances are geometrically meaningful and can be compared. The stretching and squeezing effect of the Ricci flow is clearly visible. Regions of the network that are separated from the rest of the network by geographical bottlenecks become squeezed (the edges width are thin), {\em e.g.},  the average edge distances inside the Scandinavian peninsula, (Sweden, Norway, and Finland vertices) is 0.01. 

Nonetheless, the large-distance edges are of utmost importance for the geometry of the graph, as they provide minimal distances for the regions of the graph they connect and belong to the graph backbone. Removing these edges will result in a different macroscopic structure for the graph, with distances that will change around these edges. 
The filtered graph in Fig. \ref{fig:zoom2} shows clearly that most of the long edges are in Central Europe  with the longest edge being the "Kiev-Gomel" edge with a length of 7.89. One can see two regions with high distances edges: one in in Russia and Ukraine and one that cover Germany (the average of distance in Fig. \ref{fig:zoom2} is 4.7). shown in Figure \ref{fig:zoom2}.  Indeed, most insights obtained by the analysis of the E-road network are well-known because we know the underlying geography of the map. However, it is remarkable that these insights can be retrieved using the graph alone and its discrete Ricci flow embedding.

\end{document}